\documentclass[12pt,reqno]{article}

\usepackage[usenames]{color}
\usepackage{amssymb}
\usepackage{amsmath}
\usepackage{amsthm}
\usepackage{amsfonts}
\usepackage{amscd}
\usepackage{graphicx}
\usepackage{lscape}

\usepackage[colorlinks=true,
linkcolor=webgreen,
filecolor=webbrown,
citecolor=webgreen]{hyperref}

\definecolor{webgreen}{rgb}{0,.5,0}
\definecolor{webbrown}{rgb}{.6,0,0}

\usepackage{color}
\usepackage{fullpage}
\usepackage{float}

\usepackage{graphics}
\usepackage{latexsym}
\usepackage{epsf}
\usepackage{breakurl}

\newcommand{\seqnum}[1]{\href{https://oeis.org/#1}{\rm \underline{#1}}}

\def\Enn{\mathbb{N}}

\def\Zee{\mathbb{Z}}

\newcommand{\base}[1]{{[#1]}_{-k}}
\def\andd{\, \wedge\,}

\DeclareMathOperator{\minfib}{minFib}
\DeclareMathOperator{\isfib}{isFib}
\DeclareMathOperator{\gfactoreq}{gFactorEq}
\DeclareMathOperator{\quasibifib}{quasiBiFib}

\begin{document}

\theoremstyle{plain}
\newtheorem{theorem}{Theorem}
\newtheorem{corollary}[theorem]{Corollary}
\newtheorem{lemma}[theorem]{Lemma}
\newtheorem{proposition}[theorem]{Proposition}

\theoremstyle{definition}
\newtheorem{definition}[theorem]{Definition}
\newtheorem{example}[theorem]{Example}
\newtheorem{conjecture}[theorem]{Conjecture}

\theoremstyle{remark}
\newtheorem{remark}[theorem]{Remark}

\title{Automatic Sequences in Negative Bases and\\ Proofs of Some Conjectures of Shevelev}

\author{Jeffrey Shallit,\footnote{Research funded by a grant from NSERC, 2018-04118.} \ Sonja Linghui Shan, 
and Kai Hsiang Yang\\
School of Computer Science\\
University of Waterloo\\
Waterloo, ON  N2L 3G1\\
Canada\\
\href{mailto:shallit@uwaterloo.ca}{\tt shallit@uwaterloo.ca}\\
\href{mailto:slshan@uwaterloo.ca}{\tt slshan@uwaterloo.ca}\\
\href{mailto:kh2yang@uwaterloo.ca}{\tt kh2yang@uwaterloo.ca}
}

\maketitle

\begin{abstract}
 We discuss the use of negative bases in automatic sequences.  Recently the theorem-prover {\tt Walnut} has been extended to allow the use of base $(-k)$ to express variables, thus permitting quantification over $\Zee$ instead of $\Enn$.   This enables us to prove results about two-sided (bi-infinite) automatic sequences.   We first explain the theory behind negative bases in {\tt Walnut}.   Next, we use this new version of {\tt Walnut} to give a very simple proof of a strengthened version of a theorem of Shevelev. We use our ideas to resolve two open problems of Shevelev from 2017.   We also reprove a 2000 result of Shur involving bi-infinite binary words.
\end{abstract}

\section{Introduction}

{\tt Walnut}, originally designed by Hamoon Mousavi \cite{Mousavi:2016}, is a theorem-prover that can prove or disprove first-order logical statements about automatic sequences.   Roughly speaking, automatic sequences $(a_n)_{n \geq 0}$ are those over a finite alphabet that can be computed by a DFAO (deterministic finite automaton with output) that, on input $n$ represented in some fashion, has output $a_n$.  The most famous example of such a sequence is the Thue-Morse sequence
${\bf t} = 01101001\cdots$.

As an example of the kind of statement that {\tt Walnut} can prove, consider the pattern called an {\it overlap}:  this is a word of the form $axaxa$, where $a$ is a single letter and $x$ is a possibly empty word, like the French word {\tt entente}.  The following first-order formula asserts that
${\bf t}$ has no overlaps:
\begin{equation}
 \neg \exists i,n \ (n\geq 1) \andd \forall k \ (k\leq n) \implies {\bf t}[i+k]={\bf t}[i+k+n] .
\label{one}
\end{equation}
When we enter this formula into {\tt Walnut} in a suitably-translated form, i.e.,
\begin{verbatim}
eval no_overlap "~Ei,n (n>=1) & Ak (k<=n) => T[i+k]=T[i+k+n]":
\end{verbatim}
then {\tt Walnut} returns {\tt TRUE}, thus rigorously proving the absence of overlaps in $\bf t$.   Here {\tt E} is {\tt Walnut}'s way of writing the existential quantifier, {\tt A} is the universal quantifier, {\tt =>} is logical implication, {\tt \&} is logical AND, and {\tt \char'176} is logical NOT.

The decision procedure used by
{\tt Walnut} compiles a first-order logical statement like Eq.~\eqref{one} about an automatic sequence into a series of transformations on finite automata and DFAO's (deterministic finite automata with outputs on the states).  Numbers are represented as words over a finite alphabet, in some numeration system such as base $k$.  In order for the decision procedure to work, there must be finite automata checking the addition relation $x + y = z$ and the comparison relations
$x < y$. 

Up to now, the domain of variables in {\tt Walnut} (such as the
variables $i,k,n$ appearing in Eq.~\eqref{one}), has been
restricted to $\Enn = \{0,1,2,\ldots\}$, the natural numbers.  In this paper
we describe a recent extension to {\tt Walnut} that allows
us to extend the domain of variables, in a simple and natural way, to $\Zee$, the set of all integers.   Among other
things, this enables us to mechanically prove results about
certain two-sided (bi-infinite) words (or sequences; we use these terms interchangeably).  These words can be viewed as
maps from
$\Zee$, the integers, to a finite alphabet.

We use our extension of {\tt Walnut} to improve
a 2017 result of Shevelev, and resolve two of his (up to now unproved) open problems.   We also reprove a 2000 result of 
Shur.

  For more information about {\tt Walnut}
and its capabilities, see
\cite{Shallit:2022}.

The outline of the paper is as follows.  In
Section~\ref{two} we recall the basic properties of
representation in base $-k$.  In Section~\ref{three}
we discuss automatic sequences in base $-k$.  In Section~\ref{four} we give the basic theoretical constructions that allow {\tt Walnut} to work with
base $-k$, and implementation details are given in
Section~\ref{five}.   The syntax and semantics of the new {\tt Walnut} commands are discussed in 
Section~\ref{six}; as an application we reprove a
2000 result of Shur.  In Section~\ref{seven} we give a new proof of a result of Shevelev and we also strengthen it.   In Section~\ref{eight} we solve two of Shevelev's open problems from \cite{Shevelev:2017} using our new techniques.  In Section~\ref{nine} we consider the so-called negaFibonacci representation and use it to reprove a result of Lev\'e and Richomme on the quasiperiods of the infinite Fibonacci word \cite{Leve&Richomme:2004}.   Finally, in Section~\ref{ten} we prove (and strengthen) one more result of Shevelev.

\section{Representation in base $(-k)$}
\label{two}

Let $k \geq 2$ be an integer.  In the late 19th century, Gr\"unwald \cite{Grunwald:1885} introduced the idea of
representing integers in base $(-k)$.  The history, use, and application of negative bases is described in detail in \cite[\S 4.1]{Knuth:1998} and
\cite[\S 3.7]{Allouche&Shallit:2003}.

In analogy with
ordinary base-$k$ representation, representation in base $(-k)$ involves writing
$$n = \sum_{0 \leq i \leq r} a_i (-k)^i,$$
where $n \in \Zee$ and
$a_i \in \Sigma_k$, where
$\Sigma_k := \{ 0, 1, \ldots, k-1 \}$.
Up to the inclusion of leading zeros, every integer
has a unique such base-$(-k)$ representation, called
the {\it canonical expansion}, as a word
$a_r a_{r-1} \cdots a_0$, with $a_r \not= 0$, over the alphabet $\Sigma_k$.   We denote it as $(n)_{-k}$.
Similarly, if $x \in \Sigma_k^*$ is a word,
we let $[x]_{-k}$ be the integer represented
by $x$ interpreted in base $(-k)$.

Table~\ref{tone} gives some examples of
representation in base $(-2)$---also called {\it negabinary}---where representations are given
with the most significant digit first.   Here $\epsilon$ denotes the empty word.
\begin{table}[htb]
\begin{center}
\begin{tabular}{c|c} 
$n$ & $(n)_{-2}$ \\
\hline
$-7$ &  1001  \\
$-6$ &  1110  \\
$-5$ &  1111  \\
$-4$ &  1100  \\
$-3$ &  1101  \\
$-2$ &  10    \\
$-1$ &  11    \\
 0 &  $\epsilon$      \\
 1 &  1     \\
 2 &  110   \\
 3 &  111   \\
 4 &  100   \\
 5 &  101   \\
 6 &  11010 \\
 7 &  11011 
\end{tabular}
\end{center}
\caption{Representation in base $(-2)$}
\label{tone}
\end{table}
 
The advantage to this particular representation of $\Zee$ is that we do not need artificial conventions such as a sign bit to represent integers.   It also avoids the ambiguity associated with
representations of $0$, where $-0$ and $+0$ would
have the same meaning.

\section{Automata in base $(-k)$}
\label{three}

The well-studied theory of automatic sequences
extends seamlessly to negative bases.   We say
a bi-infinite sequence $(a_n)_{n \in \Zee}$ is
{\it $(-k)$-automatic\/} if there exists a DFAO
$(Q, \Sigma_k, \Delta, \delta, q_0, \tau)$ 
where
\begin{itemize}
    \item $Q$ is a finite nonempty set of states;
    \item $\Sigma_k = \{ 0, 1, \ldots, k-1 \}$;
    \item $\Delta$ is a finite output alphabet;
    \item $\delta: Q \times \Sigma \rightarrow Q$
    is the transition function;
    \item $q_0 \in Q$ is the start state;
    \item and $\tau:Q \rightarrow \Delta$ is
    the output mapping,
\end{itemize}
such that $a(n) = \tau(\delta(q_0, x))$
for all integers $n$ and
words $x \in \Sigma^*$ such that
$[x]_{-k} = n$ (even those with leading zeroes).

Let us look at an example.   The one-sided Thue-Morse
sequence 
$$ {\bf t} = 0110100110010110 \cdots $$
can be extended in two separate ways to a two-sided generalization:
either 
$${\bf t}' = {\bf t}^R . {\bf t} = \cdots 100110010110 . 011010011001 \cdots,$$
or
$${\bf t}'' = \overline{\bf t}^R . {\bf t} = \cdots 011001101001 . 011010011001 \cdots,$$
where an $R$ as an exponent changes a one-sided right-infinite word
into a one-sided left-infinite word, the overline denotes binary
complement, and the period indicates that the $0$ index begins
immediately to the right.

Base-$(-2)$ automata for these two two-sided sequences are given
in Figure~\ref{fig1}.
\begin{figure}[htb]
\begin{center}
    \includegraphics[width=5in]{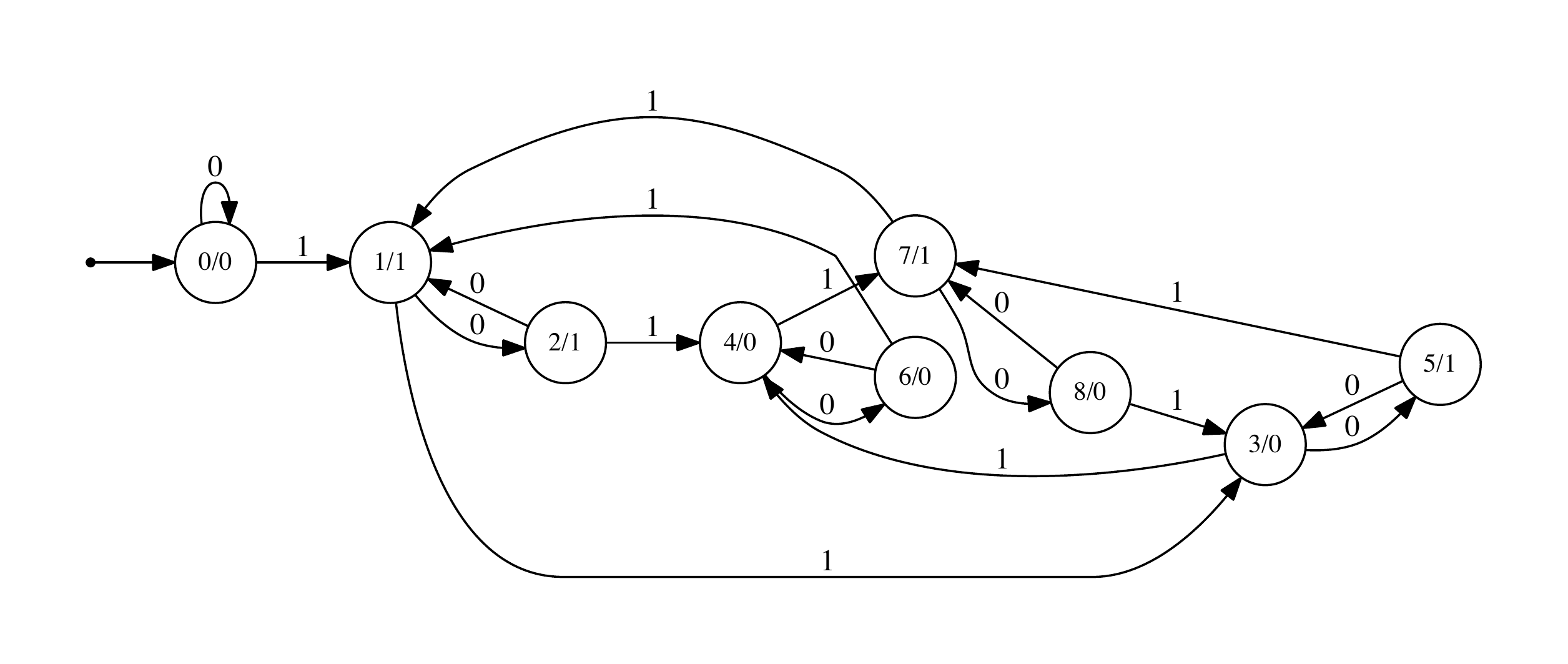} \\
    \includegraphics[width=5in]{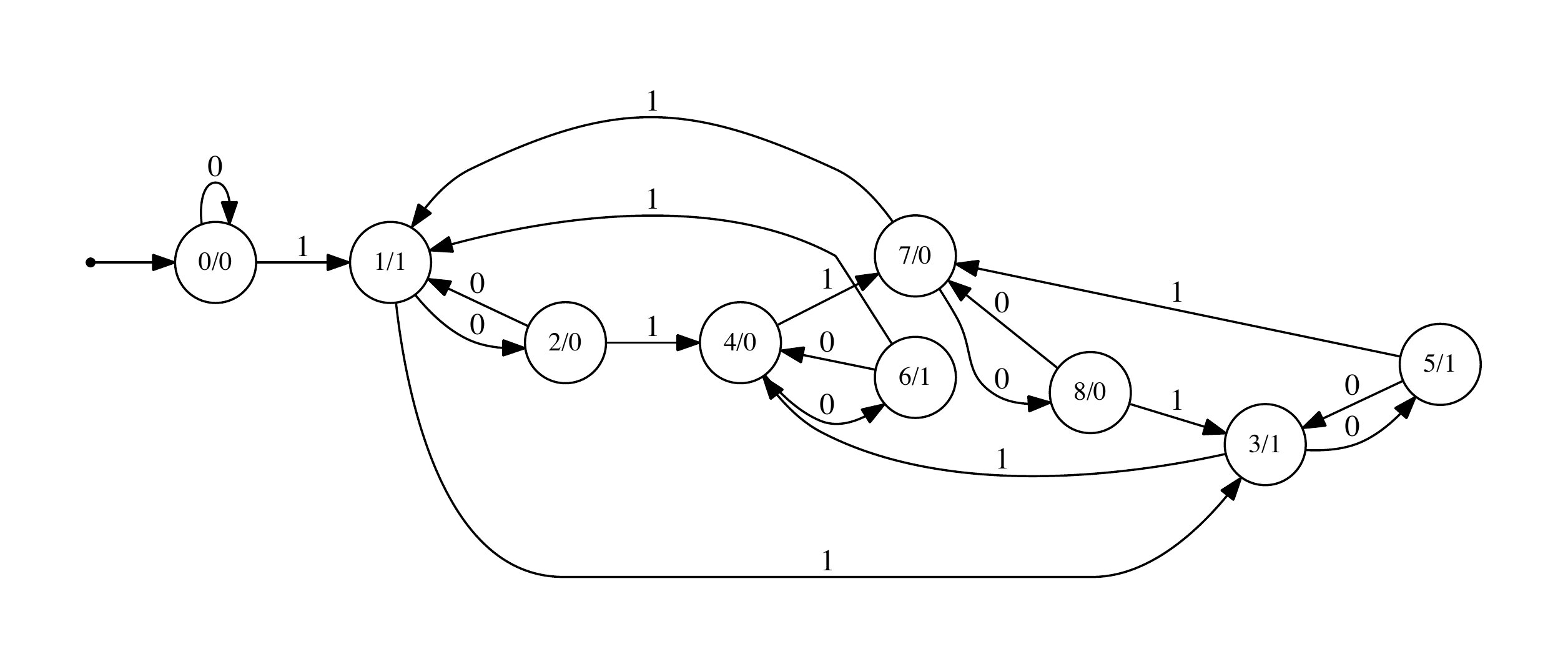}
\end{center}
\caption{Base-$(-2)$ automata generating the bi-infinite words ${\bf t}'$ (top) and ${\bf t}''$ (bottom).}
\label{fig1}
\end{figure}
Note that these automata are topologically identical; only the output
functions associated with the states differ.
We explain how to compute them in Section~\ref{twosided}.

\section{Components for working with base $(-k)$}
\label{four}

The decision procedure used by
{\tt Walnut} compiles a first-order logical statement about an automatic sequence into a series of transformations on finite automata and DFAO's (deterministic finite automata with outputs on the states).  Numbers are represented as words over a finite alphabet, in some numeration system such as base $k$.  In order for the decision procedure to work, there must be finite automata checking the addition relation $x + y = z$ and the comparison relations
$x < y$.     In this section we
describe the components needed for working with base $(-k)$.

\subsection[Adder  for base (-k)]{Adder for base \((-k)\)}\label{AutoForBaseKSection}

In this section we describe the construction of automata checking the relation $x + y = z$ in base $(-k)$.
Here $x,y,z \in \Sigma_k^*$ are all represented in base $(-k)$, and the automaton reads the representations of
$x, y, z$ in parallel, starting with the most significant digit.

Define a DFA \(M_k = (Q,\Sigma_k \times \Sigma_k \times \Sigma_k,\delta,q_0,F)\) where
\begin{enumerate}
  \item \(Q = \{ q_0, q_1, q_2, q_3 \}\);
  \item \(F = \{ q_0 \}\); and
  \item for all \([a,b,c] \in \Sigma_k \times \Sigma_k \times \Sigma_k\), we have \begin{align*}
      \delta(q_0, [a,b,c]) &= \begin{cases}
        q_0, &\text{ if } a + b - c = 0; \\
        q_1, &\text{ if } a + b - c = -1; \\
        q_2, &\text{ if } a + b - c = 1; \\
      \end{cases} \\
      \delta(q_1, [0,0,k-1]) &= q_2;\\
      \delta(q_2, [a,b,c]) &= \begin{cases}
        q_0, &\text{ if } a + b - c - k = 0;\\
        q_1, &\text{ if } a + b - c - k = -1; \\
        q_2, &\text{ if } a + b - c - k = 1,\\
      \end{cases}
  \end{align*} 
  \noindent with undefined transitions all leading to \(q_3\).
\end{enumerate}
The DFA \(M_2\) is given in \autoref{msd_neg_2_adder} with transitions to \(q_3\) hidden.
\begin{figure}[ht]
  \includegraphics[width=350pt]{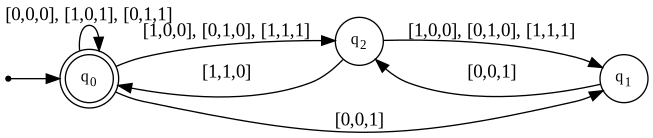}
  \centering
  \caption{\(M_2\)}\label{msd_neg_2_adder}
\end{figure}
For \(x,y,z \in \Sigma_k^*\) of length \(n\),
we interpret \([x,y,z]\) as \([x_1,y_1,z_1] [x_2,y_2,z_2] \cdots [x_n,y_n,z_n]\).
To see that \(M_k\) recognizes the language \(\{ [x,y,z] : \base{x} + \base{y} = \base{z} \}\),
we have the following result.
\begin{theorem}\label{lemma_adder}
  Let \(x,y,z \in \Sigma_k^*\) be words of length \(n\). Let \(q = \delta(q_0,[x,y,z])\).
  \begin{itemize}
    \item If \(q = q_0\) then \(\base{x} + \base{y} - \base{z} = 0\).
    \item If \(q = q_1\) then \(\base{x} + \base{y} - \base{z} = -1\).
    \item If \(q = q_2\) then \(\base{x} + \base{y} - \base{z} = 1\).
    \item If \(q = q_3\) then \(\left| \base{x} + \base{y} - \base{z} \right| > 1\).
  \end{itemize}
\end{theorem}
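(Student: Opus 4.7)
The plan is to prove the theorem by induction on the common length $n$ of $x,y,z$, using the \emph{discrepancy}
$$D(x,y,z) \;:=\; [x]_{-k} + [y]_{-k} - [z]_{-k}$$
as the invariant tracked by the state. The key algebraic observation is that appending a single triple $[a,b,c]$ on the right (least-significant end) transforms the discrepancy according to $D(xa,yb,zc) = -k\,D(x,y,z) + (a+b-c)$, because appending a digit multiplies the previous value in base $(-k)$ by $-k$ and then adds the new digit. Note also that, since $a,b,c \in \Sigma_k$, the correction term $a+b-c$ lies in $[-(k-1),\, 2(k-1)]$; this bound is what limits the spread of $D$ and is the whole reason a $4$-state machine suffices.

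For the base case $n = 0$ the empty input leaves the automaton in $q_0$ and gives $D = 0$, matching the first bullet. For the inductive step, I would assume the claim after reading a length-$n$ word and do a case analysis on the current state $q \in \{q_0, q_1, q_2, q_3\}$. In each of the first three cases the induction hypothesis pins $D$ to a specific value ($0$, $-1$, or $1$), so one just checks that the listed transition on $[a,b,c]$ sends the automaton to a state whose claimed $D$ equals $-kD + (a+b-c)$. For instance, at $q_2$ we have $D = 1$, and the new discrepancy $-k + (a+b-c)$ equals $0$, $-1$, or $1$ precisely when $a+b-c$ equals $k$, $k-1$, or $k+1$, matching the three transitions listed from $q_2$; every other $a+b-c$ yields a new discrepancy of absolute value at least $2$ and is correctly routed to $q_3$. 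The $q_0$ and $q_1$ cases are analogous; in particular, at $q_1$ one verifies that $(a,b,c) = (0,0,k-1)$ is the unique triple for which the new discrepancy ($-kD+(a+b-c) = k + a + b - c$) lies in $\{-1,0,1\}$.

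The only part that is not a routine verification is the $q_3$ case, since the automaton only asserts $|D| > 1$ rather than a specific value, and $q_3$ is the sink for every transition not explicitly listed. The critical point is that $q_3$ is an absorbing trap that still preserves its claimed invariant: if $|D| \geq 2$, then the triangle inequality gives
$$\bigl|{-kD + (a+b-c)}\bigr| \;\geq\; k|D| - |a+b-c| \;\geq\; 2k - 2(k-1) \;=\; 2,$$
so $|D|$ remains at least $2$ after any further step. Combined with the earlier case analysis showing that every transition from $q_0, q_1, q_2$ into $q_3$ corresponds to a new discrepancy of absolute value at least $2$, this closes the induction and proves all four bullets.
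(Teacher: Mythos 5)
Your proposal is correct and follows essentially the same route as the paper: induction on $n$, tracking the discrepancy $D = [x]_{-k}+[y]_{-k}-[z]_{-k}$, using the recurrence $D(xa,yb,zc) = -kD(x,y,z) + (a+b-c)$, a case analysis on the current state, and a triangle-inequality argument (with the bound $|a+b-c| \le 2k-2$) to show that $q_3$ is absorbing. The paper's proof is just a slightly more spelled-out version of the same argument.
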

\begin{proof}
  We proceed by induction on \(n\).
  \begin{itemize}
    \item Suppose \(n = 0\). Then, \(q = \delta(q_0,\varepsilon) = q_0\).
      Trivially, \(\base{x} + \base{y} - \base{z} = 0\).
    \item Suppose \(n > 0\). Let \(x = x'a, y = y'b, z = z'c\) for \(x',y',z' \in \Sigma_k^*\)
      and \(a,b,c \in \Sigma_k\). Suppose the claim holds for \([x',y',z']\).
      Let \(q' = \delta(q_0,[x',y',z'])\). Then, \(q = \delta(q',[a,b,c])\).
      \begin{itemize}
        \item Case \(q' = q_0\). By the inductive hypothesis, \(\base{x'} + \base{y'} - \base{z'} = 0\). Then,
          \[\base{x} + \base{y} - \base{z} = \base{x'0} + \base{y'0} - \base{z'0} + a + b - c = a + b - c\]
          If \(q = q_0,q_1,q_2\) or \(q_3\), we have \(a + b - c = 0,-1,1\) or \(|a+b-c| > 1\) respectively.
        \item Case \(q' = q_1\). By the inductive hypothesis, \(\base{x'} + \base{y'} - \base{z'} = -1\). Then,
          \[\base{x} + \base{y} - \base{z} = \base{x'0} + \base{y'0} - \base{z'0} + a + b - c = k + a + b - c \]
          If \(q = q_2\), \(a = b = 0\) and \(c = k-1\). So \(k + a + b - c = 1\).
          If \(q = q_3\), \(a \neq 0\), \(b \neq 0\) or \(c \neq k-1\).
          Suppose \(|k + a + b - c| \leq 1\). Then, \(a + b - c \leq -k+1\).
          It follows that \(a = b = 0\) and \(c = k-1\), a contradiction.
          Hence, \(|k + a + b - c| > 1\).
        \item Case \(q' = q_2\). By the inductive hypothesis, \(\base{x'}+\base{y'} - \base{z'} = 1\).
          Then, \[\base{x} + \base{y} - \base{z} = \base{x'0} + \base{y'0} - \base{z'0} + a + b - c
          = a + b - c - k\]
          If \(q = q_0,q_1,q_2\) or \(q_3\), we have \(a + b - c - k = 0,-1,1\)
          or \(|a+b-c-k| > 1\) respectively.
        \item Case \(q' = q_3\). Since \(q' = q_3\), \(q = q_3\).
          By the inductive hypothesis, \(\left|\base{x'}+\base{y'}-\base{z'}\right| > 1\).
          It follows that \(\left|\base{x'}+\base{y'}-\base{z'}\right| \geq 2\)
          and then, \(\left|\base{x'0} + \base{y'0} - \base{z'0}\right| \geq 2k\). 
          Since \(\left|a + b - c\right| \leq 2k-2\),
          \[\left|\base{x} + \base{y} - \base{z}\right| = \left|\base{x'0} + \base{y'0} - \base{z'0}
          + a + b - c\right| > 1\]
      \end{itemize}
      In all cases, the claim holds for \([x,y,z]\).
  \end{itemize}
  This completes our proof by induction.
\end{proof}
By \autoref{lemma_adder}, we have \(M_k\) accepts \([x,y,z]\) if and only if
\(\base{x} + \base{y} = \base{z}\).

\subsection[Comparison automaton for base (-k)]{Comparison automaton for base \((-k)\)}

In this section we describe the construction of an automaton for checking if $x < y$.

Define DFA \(N_k = (Q,\Sigma_k \times \Sigma_k,\delta,q_0,F)\) where
\begin{enumerate}
  \item \(Q = \{ q_0, q_1, q_2 \}\),
  \item \(F = \{ q_2 \}\), and
  \item for all \([a,b] \in \Sigma_k \times \Sigma_k\), we have \begin{align*}
      \delta(q_0, [a,b]) &= \begin{cases}
        q_0, &\text{ if } a - b = 0; \\
        q_1, &\text{ if } a - b > 0; \\
        q_2, &\text{ if } a - b < 0; \\
      \end{cases} \\
      \delta(q_1, [a,b]) &= q_2 \\
      \delta(q_2, [a,b]) &= q_1.
  \end{align*}
\end{enumerate}
The DFA \(N_2\) is given in \autoref{msd_neg_2_less_than}.
\begin{figure}[ht]
  \includegraphics[width=250pt]{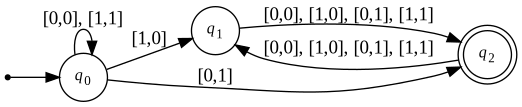}
  \centering
  \caption{\(N_2\)}\label{msd_neg_2_less_than}
\end{figure}
For \(x,y \in \Sigma_k^*\)  of length \(n\),
we similarly interpret \([x,y]\) as \([x_1,y_1] [x_2,y_2] \cdots [x_n,y_n]\).
To see that \(N_k\) recognizes the language \(\{ [x,y] : \base{x} < \base{y} \}\),
we have the following result.
\begin{theorem}\label{lemma_less_than} 
  Let \(x,y \in \Sigma_k^*\) be words of length \(n\). Let \(q = \delta(q_0,[x,y])\).
  \begin{itemize}
    \item If \(q = q_0\) then \(\base{x} - \base{y} = 0\).
    \item If \(q = q_1\) then \(\base{x} - \base{y} > 0\).
    \item If \(q = q_2\) then \(\base{x} - \base{y} < 0\).
  \end{itemize}
\end{theorem}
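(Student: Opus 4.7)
The plan is to mirror the inductive structure used in the proof of Theorem~\ref{lemma_adder}, proceeding by induction on the common length $n$ of $x$ and $y$. The base case $n=0$ is immediate: $\delta(q_0,\varepsilon) = q_0$ and $\base{x} - \base{y} = 0$, which matches the claim.

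For the inductive step, I would write $x = x'a$ and $y = y'b$ with $a,b \in \Sigma_k$, set $q' = \delta(q_0,[x',y'])$, and apply the identity
\[ \base{x'a} - \base{y'b} \;=\; -k\bigl(\base{x'} - \base{y'}\bigr) + (a-b), \]
which comes from the fact that appending a new least-significant digit in base $(-k)$ multiplies the running value by $-k$ and then adds the digit. Then I would split on $q'$. If $q' = q_0$, the inductive hypothesis gives $\base{x'} = \base{y'}$, so the difference equals $a-b$, and the three transitions out of $q_0$ read off its sign directly. If $q' = q_1$, the inductive hypothesis gives $\base{x'} - \base{y'} \geq 1$; the identity then forces $\base{x} - \base{y} \leq -k + (k-1) = -1 < 0$, matching the transition $\delta(q_1,[a,b]) = q_2$. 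The case $q' = q_2$ is symmetric, giving $\base{x} - \base{y} \geq k - (k-1) = 1 > 0$ and matching $\delta(q_2,[a,b]) = q_1$.

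No real obstacle arises, but the one point worth flagging is the unconditional swap $\delta(q_1,\cdot) = q_2$ and $\delta(q_2,\cdot) = q_1$: this reflects the sign reversal produced by the factor $-k$ each time a digit is appended. Once a nonzero difference has been established, any digit pair in $\Sigma_k \times \Sigma_k$ changes the outcome by at most $k-1$ in absolute value, which is strictly smaller than the magnitude $k\bigl|\base{x'} - \base{y'}\bigr| \geq k$ contributed by the recursive step, so the sign of the total is already determined and simply flips at each position. This is also what makes three states suffice: once we are off the diagonal, only the current sign needs to be tracked, not the magnitude of the difference.
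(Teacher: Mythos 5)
Your proposal is correct and follows essentially the same inductive argument as the paper: the base case, the digit-appending identity (which the paper writes as $\base{x} - \base{y} = \base{x'0} - \base{y'0} + a - b$, equivalent to your $-k(\base{x'}-\base{y'}) + (a-b)$), and the three-way case split on $q'$ with the same bounds $|a-b| \leq k-1 < k \leq k\,|\base{x'}-\base{y'}|$ all match. The closing remark explaining why the sign flips at each step and why three states suffice is a nice piece of intuition not spelled out in the paper, but the underlying argument is identical.
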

\begin{proof}
  We proceed by induction on \(n\).
  \begin{itemize}
    \item Suppose \(n = 0\). Then, \(q = \delta(q_0,\varepsilon) = q_0\).
      Trivially, \(\base{x} - \base{y} = 0\).
    \item Suppose \(n > 0\). Let \(x = x'a, y = y'b\) for \(x',y' \in \Sigma_k^*\) and \(a,b \in \Sigma_k\).
      Suppose the result holds for \([x',y']\). Let \(q' = \delta(q_0, [x',y'])\). Then, \(q = \delta(q',[a,b])\).
      \begin{itemize}
        \item Case \(q' = q_0\). By the inductive hypothesis, \(\base{x'} - \base{y'} = 0\). Then,
          \[\base{x} - \base{y} = \base{x'0} - \base{y'0} + a - b = a - b\]
          If \(q = q_0,q_1\) or \(q_2\), we have \(a-b = 0\), \(a-b > 0\) or \(a-b < 0\) respectively.
        \item Case \(q' = q_1\). Since \(q' = q_1\), \(q = q_2\).
          By the inductive hypothesis, \(\base{x'} - \base{y'} > 0\).
          It follows that \(\base{x'} - \base{y'} \geq 1\) and then, \(\base{x'0} - \base{y'0} \leq -k\). Then,
          \[\base{x}-\base{y} = \base{x'0}-\base{y'0} + a-b \leq a-b-k\]
          Since \(a-b < k\), \(a-b-k < 0\).
          So \(\base{x}-\base{y} \leq a-b-k < 0\).
        \item Case \(q' = q_2\). Since \(q' = q_2\), \(q = q_1\).
          By the inductive hypothesis, \(\base{x'} - \base{y'} < 0\).
          It follows that \(\base{x'} - \base{y'} \leq -1\) and then, \(\base{x'0} - \base{y'0} \geq k\). Then,
          \[\base{x}-\base{y} = \base{x'0}-\base{y'0} + a-b \geq k+a-b\]
          Since \(a-b > -k\), \(k+a-b > 0\).
          So \(\base{x}-\base{y} \geq k+a-b > 0\).
      \end{itemize}
      In all cases, the result holds for \([x,y,z]\).
  \end{itemize}
  This completes our induction proof.
\end{proof}
By \autoref{lemma_less_than}, we have \(N_k\) accepts \([x,y]\) if and only if
\(\base{x} < \base{y}\).

\subsection{Conversion from base $(-k)$ to base $k$}\label{BaseKConversionSection}

In this section we describe an automaton that can be used to convert from base $(-k)$ to base $k$ and vice versa.  
  The theory behind this can be found in
\cite[\S 5.3]{Allouche&Shallit:2003}.
\begin{figure}[H]
  \includegraphics[scale=.23]{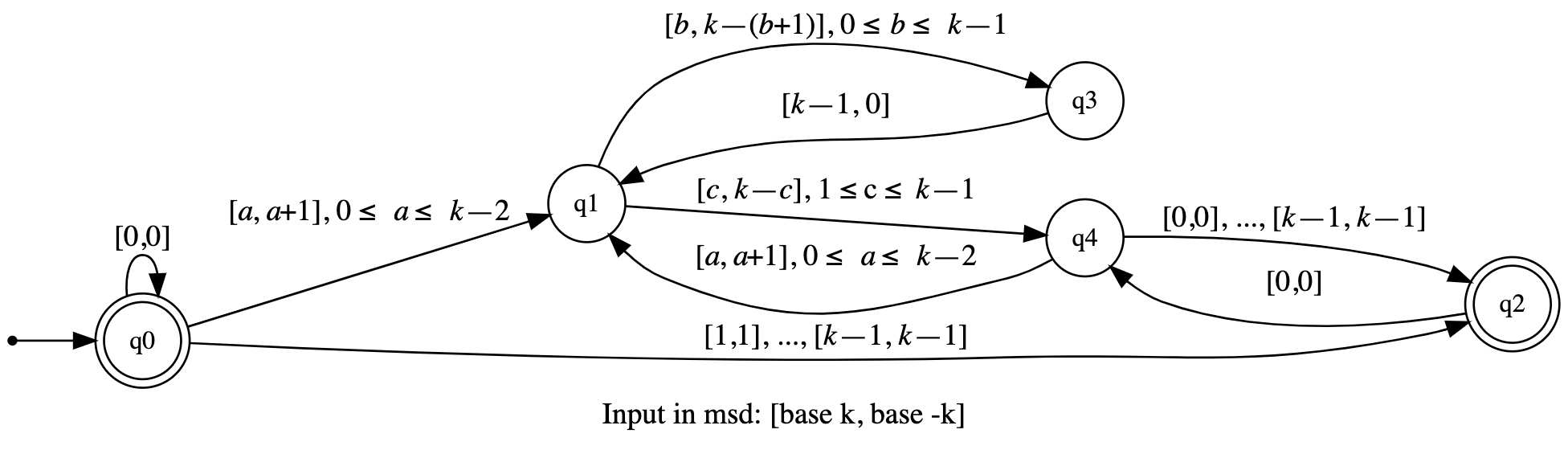}
  \centering
  \caption{$P_k$}\label{cmp_general_DFA}
\end{figure}

We define a DFA $P_k = (Q, \Sigma_k \times \Sigma_k, \delta, q_0, F)$ 
as shown in \autoref{cmp_general_DFA} where
\begin{itemize}
    \item 
    $Q = \{q_0, q_1, q_2, q_3, q_4\}$,
    \item 
    $F = \{q_0, q_2\}$, and 

    \item 
    We define the following transitions 
    \begin{align*}
        &\delta(q_0, [0, 0]) = q_0, \\
        &\delta(q_0, [a, a+1]) = q_1, 0 \leq a \leq k-2, \\
        &\delta(q_0, [m, m]) = q_2, 1 \leq m \leq k-1, \\ 
        &\delta(q_1, [b, k-(b+1)]) = q_3, 0 \leq b \leq k-1, \\ 
        &\delta(q_1, [c, k-c]) = q_4, 1 \leq c \leq k-1, \\ 
        &\delta(q_2, [0, 0]) = q_4, \\ 
        &\delta(q_3, [k-1, 0]) = q_1, \\ 
        &\delta(q_4, [a, a+1]) = q_1, 0 \leq a \leq k-2, \\
        &\delta(q_4, [n, n]) = q_2, 0 \leq n \leq k-1.
    \end{align*}
\end{itemize}

\begin{theorem}
    Let $x,y \in \Sigma_k^*$ be words of length $n$. 
    Let $q = \delta(q_0, [x,y])$.  
    \begin{itemize}
        \item 
        If $q = q_0$ then $[x]_k = [y]_{-k} = 0$.
        \item 
        If $q = q_1$ then $x, y > 0$ and $[x]_k - [y]_{-k} = -1$.
        \item 
        If $q = q_2$ then $x, y > 0$ and $[x]_k - [y]_{-k} = 0$.
        \item 
        If $q = q_3$ then $x>0$, $y<0$, and $[x]_k + [y]_{-k} = -1$.
        \item 
        If $q = q_4$ then $x>0$, $y<0$, and $[x]_k + [y]_{-k} = 0$.
    \end{itemize}
\end{theorem}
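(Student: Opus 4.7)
The plan is to proceed by induction on $n$, in the same pattern as Theorems~\ref{lemma_adder} and~\ref{lemma_less_than}. The base case $n = 0$ is immediate: $q = q_0$ and both $[x]_k = [y]_{-k} = 0$ vacuously. For the inductive step I would write $x = x'a$ and $y = y'b$ with $a, b \in \Sigma_k$, set $q' = \delta(q_0, [x', y'])$ and $q = \delta(q', [a, b])$, and invoke the two recurrences
\[
[x'a]_k = k[x']_k + a, \qquad [y'b]_{-k} = -k[y']_{-k} + b,
\]
which are exactly what makes this most-significant-digit-first induction go through.

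With these in hand I would do a case analysis on $q'$, taking all five states in turn and examining every defined outgoing transition. In each case, substituting the inductive hypothesis into the recurrences lets me verify both the arithmetic relation and the sign claims attached to the new state $q$. A representative calculation is at $q' = q_1$, where $[x']_k - [y']_{-k} = -1$: the transition $\delta(q_1,[b, k-b-1]) = q_3$ gives
\[
[x]_k + [y]_{-k} = k[x']_k + b + \bigl(-k[y']_{-k} + k - b - 1\bigr) = k\bigl([x']_k - [y']_{-k}\bigr) + k - 1 = -1,
\]
matching the claim for $q_3$; the remaining eight transition families reduce to similar one-line algebra, and the sign conditions follow from the observation that appending a digit in base $-k$ flips the sign of $[y']_{-k}$ (while appending in base $k$ preserves non-negativity of $[x']_k$).

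The main obstacle is bookkeeping rather than any real conceptual difficulty: five states and nine transition families produce roughly a dozen small arithmetic identities to verify, and the signs of $[y]_{-k}$ must be tracked carefully because of that sign flip. A mild subtlety is the status of the strict positivity claims ``$x > 0$'' and ``$y > 0$'' immediately after leaving $q_0$: for instance $\delta(q_0, [0, 1]) = q_1$ gives $[x]_k = 0$, so these claims are cleanest to read as invariants guaranteed along any run that eventually reaches an accepting state in $F = \{q_0, q_2\}$, rather than as properties of every finite prefix in isolation.
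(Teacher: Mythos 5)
Your proposal takes the same inductive approach as the paper: induction on $n$, write $x = x'a$, $y = y'b$, invoke the recurrences $[x'a]_k = k[x']_k + a$ and $[y'b]_{-k} = -k[y']_{-k} + b$, and verify each transition by substituting the inductive hypothesis. The paper organizes the case analysis by the \emph{target} state $q$ (then subcases on $q'$), while you propose to organize by the \emph{source} state $q'$; this is a cosmetic difference, as both enumerate the same finite set of defined transitions. Your representative calculation at $q' = q_1$ with transition to $q_3$ is right (modulo a small variable-name collision between the last digit of $y$ and the transition label, which should be resolved to $x = x'a$, $y = y'b$ with $a + b = k-1$).

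Your ``mild subtlety'' is a genuine observation, and it is worth noting that the paper itself does not resolve it: the paper's proof simply asserts, e.g., ``If $q = q_1$, then, by the construction of $P_k$, $n$ is odd and $x,y > 0$'' without argument, and your example $\delta(q_0,[0,1]) = q_1$ with $[x]_k = 0$ shows the literal claim $[x]_k > 0$ is false for some prefixes. That said, your proposed repair --- reading the positivity conditions as invariants guaranteed along runs that eventually reach $F = \{q_0,q_2\}$ --- does not quite close the gap either: the intermediate pair $(x,y) = (0,1)$ at $q_1$ still violates $[x]_k > 0$ even when the run is later extended to an accepting state, since the theorem is about the word $(x,y)$ read so far, not about some later extension. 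The clean fix is to drop (or weaken to $\geq 0$) the claim on $[x]_k$ at states $q_1$ and $q_3$, or to restrict the theorem to accepted inputs. Note, however, that the arithmetic identities and the sign of $[y]_{-k}$ --- which are what the conversion construction actually relies on --- are correct and carried by your induction exactly as by the paper's.
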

\begin{proof}
    We proceed by induction on $n$.
    
    Let $n = 0$, then $\delta(q_0, \epsilon) = q_0$ 
    and we have $[x]_k = [y]_{-k} = 0$.
    
    Let $n > 0$. Let $x = x'a$ and $y = y'b$, 
    where $x',y' \in \Sigma_k^*$ and $a, b \in \Sigma_k$.
    Let $q' = \delta(q_0, [x', y'])$ 
    and $q = \delta(q_0, [x, y]) = \delta(q', [a, b])$.
    Consider the following cases:
    \begin{itemize}
        \item 
        If $q = q_1$, then, by the construction of $P_k$,
        $n$ is odd and $x,y > 0$.
        We consider the following cases for $q'$:
        \begin{itemize}
            \item 
            If $q' = q_0$ or $q_4$ then
            $[x']_k + [y']_{-k} = 0$ by the induction hypothesis
            and $[a]_k - [b]_{-k} = -1$ by the construction of $P_k$. Then 
            \begin{align*}
                 [x]_k - [y]_{-k} 
                &= k \cdot ([x']_k + [y']_{-k}) + [a]_k - [b]_{-k} \\
                &= [a]_k - [b]_{-k} \\ 
                &= -1.
            \end{align*}
            \item 
            If $q' = q_3$ then
            $[x']_k + [y']_{-k} = -1$ by the induction hypothesis
            and $[a]_k - [b]_{-k} = k-1$ by the construction of $P_k$. Then 
            \begin{align*}
                 [x]_k - [y]_{-k} 
                &= k \cdot ([x']_k + [y']_{-k}) + [a]_k - [b]_{-k} \\
                &= -k + k -1  \\ 
                &= -1.
            \end{align*}
        \end{itemize}
        
        \item 
        If $q = q_2$, then, by the construction of $P_k$,
        $n$ is odd and $x,y > 0$.
        For both $q' = q_0$ and $q' = q_4$,
        $[x']_k + [y']_{-k} = 0$ by the induction hypothesis
        and $[a]_k - [b]_{-k} = 0$ by the construction of $P_k$. Then 
        \begin{align*}
             [x]_k - [y]_{-k} 
            &= k \cdot ([x']_k + [y']_{-k}) + [a]_k - [b]_{-k} \\
            &= 0.
        \end{align*}
        
        \item 
        If $q = q_3$, then, by the construction of $P_k$,
        $n$ is even and $x>0$ whereas $y<0$.
        Since $q' = q_1$,
        $[x']_k - [y']_{-k} = -1$ by the induction hypothesis
        and $[a]_k + [b]_{-k} = k-1$ by the construction of $P_k$. Then 
        \begin{align*}
             [x]_k + [y]_{-k} 
            &= k \cdot ([x']_k - [y']_{-k}) + [a]_k + [b]_{-k} \\
            &= -k + k - 1 \\
            &= -1.
        \end{align*}
        
        \item 
        If $q = q_4$, then, by the construction of $P_k$,
        $n$ is even and $x>0$ whereas $y<0$.
        We consider the following cases for $q'$:
        \begin{itemize}
            \item 
            If $q' = q_1$ then
            $[x']_k - [y']_{-k} = -1$ by the induction hypothesis
            and $[a]_k + [b]_{-k} = k$ by the construction of $P_k$. Then 
            \begin{align*}
                 [x]_k + [y]_{-k} 
                &= k \cdot ([x']_k - [y']_{-k}) + [a]_k + [b]_{-k} \\
                &= -k + k \\ 
                &= 0.
            \end{align*}
            \item 
            If $q' = q_2$ then
            $[x']_k - [y']_{-k} = 0$ by the induction hypothesis
            and $[a]_k + [b]_{-k} = 0$ by the construction of $P_k$. Then 
            \begin{align*}
                 [x]_k + [y]_{-k} 
                &= k \cdot ([x']_k - [y']_{-k}) + [a]_k + [b]_{-k} \\
                &= 0.
            \end{align*}
        \end{itemize}
    \end{itemize}
\end{proof}

\section[Extending Walnut to base (-k)]{Extending {\tt Walnut} to base \((-k)\)}
\label{five}

In this section, we describe how to extend {\tt Walnut} to base \((-k)\).
Suppose \(x,y,z \in \mathbb{Z}\) are written in base \((-k)\).
Recall from Section~\ref{AutoForBaseKSection}, we may check the relations \(x + y = z\) and \(x < y\)
using the base-\((-k)\) automata. 
These observations allow extending {\tt Walnut} to base \((-k)\).
Specifically, the comparison automaton \(N_k\) allows checking relations involving the relational
operators, \(=, \neq, <, >, \leq\), and \(\geq\), available in {\tt Walnut}.
Moreover, the handling of logical operators, quantification, and indexing into automata does not need to be
changed to work with specifically with base \((-k)\).
It remains to discuss how {\tt Walnut} can be extended to handle arithmetic operators in base \((-k)\).

{\tt Walnut} for base $(-k)$ is available at\\
\centerline{\url{ https://github.com/jono1202/Walnut} \ .}.

\subsection[Constants and negation in base (-k)]{Constants and negation in base \((-k)\)}

In base \((-k)\), every representation matching \({\tt 0}^* {\tt 1}\)
represents the integer \(1\). So the relation \(x = 1\) may be checked by the automaton corresponding to
\({\tt 0}^* {\tt 1}\). 
Using this fact, the relation \(x = m\) can be checked using \({\tt Walnut}\) when \(m > 1\).
Similarly, the relation \(x = 0\) may be checked by the automaton corresponding to \({\tt 0}^*\).
Alternatively, when \(m < 0\), the relation \(x = m\) can be rewritten as
\[\exists y,z \in \mathbb{Z}\  x + y = z \, \wedge\, y = (-m) \, \wedge\, z = 0. \]
This allows checking the relation \(x = m\) for all \(m \in \mathbb{Z}\).
More generally, if \(p(x_1, \dots, x_m)\) is an arithmetic expression involving variables \(x_1, \dots, x_m\),
the relation \(x = -p(x_1, \dots, x_m)\) can be rewritten as 
\[\exists y,z \in \mathbb{Z} \  x + y = z\, \wedge\, y = p(x_1, \dots, x_m) \, \wedge\, z = 0 .\]
In this way, we have the ability to negate any arithmetic expression.

\subsection[Arithmetic expressions in base (-k)]{Arithmetic expressions in base \((-k)\)}

In this section, we show how addition, subtraction, multiplication by a constant,
and division by a nonzero constant can be done in base \((-k)\).
The ability to check relations \(x = m\) for all \(m \in \mathbb{Z}\),
allows checking addition or subtraction, possibly involving integer constants.
For example, \(m - x = y\) can be rewritten as
\[\exists z\ x + y = z \wedge z = m\]
When \(m \geq 0\), the relation \(x = my\) can be handled in the same way as in non-negative bases.
When \(m < 0\), \(my\) can be rewritten as the negation of the arithmetic expression \((-m) y\).
Specifically, we rewrite \(x = my\) as \(x = -((-m) y)\).
This allows checking the relation \(x = my\) for all \(m \in \mathbb{Z}\).
Finally, we need to be able to check the relation \(x = \lfloor y/m \rfloor\) for nonzero integers \(m\).
When \(m < 0\), we rewrite \(x = \lfloor y/m \rfloor\) as
\[\exists r \in \mathbb{Z}\ y = xm + r \wedge m < r \leq 0 \]
Similarly, when \(m > 0\), we rewrite \(x = \lfloor y/m \rfloor\) as
\[\exists r \in \mathbb{Z}\ y = xm + r \wedge 0 \leq r < m \]
This allows checking the relation \(x = \lfloor y/m \rfloor\) for nonzero integer constants \(m\).

\section{New {\tt Walnut} commands}
\label{six}

In this section, we describe the new split, reverse-split, and join commands in {\tt Walnut} which will be useful for working in base \((-k)\).

{\tt Walnut} with the split, reverse-split, and join commands is available at\\
\centerline{\url{ https://github.com/jono1202/Walnut} \ .}.

\subsection[Syntax of base -k commands]{Syntax of base \((-k)\) comands}

Analogously to base \(k\), the phrase \({\tt ?msd\_neg\_k}\) preceding a formula specifies the
formula should be evaluated in base \((-k)\), with most significant digit first.
Alternatively, the phrase \({\tt ?lsd\_neg\_k}\) preceding a formula specifies the formula should be
evaluated in base \((-k)\), with least significant digit first.
The unary negation operation is written using \({\tt \_}\) (an underscore), and
can be written preceding any arithmetic expression, variable, or constant.

One thing to keep in mind when using base $(-k)$ is that when using these,
the universal and existential quantifiers now apply to all of $\Zee$, not just $\Enn$.

\subsection{The split command}

The observation from Section~\ref{BaseKConversionSection} implies that, given a \((-k)\)-automatic sequence \({(a_n)}_{n \in \mathbb{Z}}\), the corresponding transformed sequence \({(a_n)}_{n \in \mathbb{N}}\) is \(k\)-automatic.
Moreover, the base-\(k\) DFAO for \({(a_n)}_{n \in \mathbb{N}}\) can be computed using existing {\tt Walnut} functionality.

For sake of example, suppose \({(a_n)}_{n \in \mathbb{Z}}\) is over the output alphabet \(\{{\tt 0}, {\tt 1}\}\), \({\tt A}\) is the base-\((-k)\) DFAO for \({(a_n)}_{n \in \mathbb{Z}}\), and \({\tt C}\) is the conversion automaton from base-\(k\) to base-\((-k)\).
Using existing \({\tt Walnut}\) commands, we may produce the base-\(k\) DFAO for \({(a_n)}_{n \in \mathbb{Z}}\) as follows.
\begin{verbatim}
eval a "En A[n] = @1 & C(m,n)";
combine AS a;
\end{verbatim}

The new split command is used to more easily compute the same automaton, like so.
\begin{verbatim} 
split AS A[+];
\end{verbatim}
The split command can also produce the automaton for \({(a_{-n})}_{n \in \mathbb{Z}}\) and be used on DFAO with any amount of inputs.
For example, given a base-\((-k)\) DFAO, \({\tt B}\), with two inputs, we may use split to transform \({\tt B}\) like so.
\begin{verbatim}
split BS B[+] [-];
\end{verbatim}
\({\tt BS}\) outputs on \((x,y)\) in base $k$ the same output as \({\tt B}\) does on \((x,-y)\) in base \((-k)\).

\subsection{The reverse-split command}

The observation from Section~\ref{BaseKConversionSection} also implies that, given a \(k\)-automatic sequence \({(a_n)}_{n \in \mathbb{N}}\), the corresponding transformed sequence \({(a_n)}_{n \in \mathbb{Z}}\), where \(a_n := 0\) if \(n < 0\), is \((-k)\)-automatic.
The base-\((-k)\) DFAO for \({(a_n)}_{n \in \mathbb{Z}}\) can be computed using the new reverse-split command.
\begin{verbatim}
rsplit AR[+] A;
\end{verbatim}
Also similar to the split command, the reverse-split command can produce the automaton for \({(a_{-n})}_{n \in \mathbb{Z}}\) and be used on DFAO with any number of inputs.
For example, given a base-\(k\) DFAO, \({\tt B}\), with two inputs, we may use reverse-split as follows:
\begin{verbatim}
rsplit BR[+] [-] B;
\end{verbatim}
When \(x,-y \in \mathbb{N}\), \({\tt BR}\) outputs on \((x,y)\) in base \((-k)\) the same output as \({\tt B}\) does on \((x,-y)\) in base \(k\). Otherwise \({\tt BR}\) outputs zero.

\subsection{The join command}

Given a list of input DFAO, the join command produces the DFAO which outputs the first nonzero value of the input DFAO. For example, given two base-\(k\) DFAOs, \({\tt A}\) and \({\tt B}\), both with two inputs, the join command can be used like so.
\begin{verbatim}
join ABJ A[x] [y] B[x] [y];
\end{verbatim}
\({\tt ABJ}\) outputs on \((x,y)\) the first nonzero output of \({\tt A}\) or \({\tt B}\) at \((x,y)\).
If both \({\tt A}\) and \({\tt B}\) output zero on \((x,y)\), \({\tt ABJ}\) will output zero on \((x,y)\).

\subsection{Examples}
\label{twosided}

As an illustration of these new 
{\tt Walnut} commands, we show how to obtain the
automata in Figure~\ref{fig1} that extend the
one-sided Thue-Morse word ${\bf t}$ to 
to two bi-infinite sequences ${\bf t}'$ and
${\bf t}''$.
\begin{verbatim}
def tmn "T[n-1]=@1":
combine T2 tmn:

rsplit T21 [+] T:
rsplit T22 [-] T2:
join TM21 T21[x] T22[x]:

def tmn2 "T[n-1]=@0":
combine T3 tmn2:

rsplit T23 [-] T3:
join TM22 T21[x] T23[x]:
\end{verbatim}
Here we create the first automaton, {\tt TM21.txt},
by joining two automata, one for Thue-Morse on
the non-negative integers, and one for Thue-Morse
on the negative integers.  The second automaton,
{\tt TM22.txt}, is constructed similarly, except
we use $\overline{\bf t}$ for the negative integers.

To see that these two infinite words
${\bf t}'$ and ${\bf t}''$ are true generalizations of $\bf t$ to the bi-infinite case, we can prove the following theorem.
\begin{theorem}
A finite word $x$ is a factor of $\bf t$ iff it is a factor of ${\bf t}'$, iff it is a factor of ${\bf t}''$.
\end{theorem}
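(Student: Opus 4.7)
The plan is to mechanically verify this theorem using the extended {\tt Walnut}, exploiting the three automata already in hand: {\tt T} (for $\bf t$, in base $2$), {\tt TM21} (for ${\bf t}'$, in base $(-2)$), and {\tt TM22} (for ${\bf t}''$, in base $(-2)$). Factor equivalence reduces, by transitivity of ``has the same set of factors as,'' to two independent checks: $\bf t$ vs.\ ${\bf t}'$ and $\bf t$ vs.\ ${\bf t}''$.

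One direction of each check is essentially immediate from the construction in Section~\ref{twosided}: because ${\bf t}'$ and ${\bf t}''$ agree with $\bf t$ on non-negative indices, every factor of $\bf t$ appears at the same position in both bi-infinite extensions. The substantive direction is the converse, that every factor of ${\bf t}'$ (resp.\ ${\bf t}''$) already occurs somewhere in $\bf t$. To handle this uniformly inside {\tt Walnut}, I would work throughout in base $(-2)$, using the automaton {\tt T21} built in Section~\ref{twosided} via {\tt rsplit}, which represents $\bf t$ on $\Enn$ and outputs $0$ on negative indices. For each pair I would write a formula of the shape
\[
\forall i\ \forall n\,\bigl(n\geq 0\bigr)\implies \exists j\,\bigl(j\geq 0\bigr)\andd \forall k\,\bigl(0\leq k \leq n\bigr)\implies W[i+k]=\mathtt{T21}[j+k],
\]
where $W$ is {\tt TM21} or {\tt TM22}; the conditions $j\geq 0$ and $k\geq 0$ force {\tt T21}$[j+k] = {\bf t}[j+k]$, so this is exactly the factor-of-$\bf t$ statement. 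I would symmetrically check the reverse inclusion (quantifying $i$ over $\Zee$ with {\tt T21} on the left and $W$ on the right, restricted to $i\geq 0$ to get a true position in $\bf t$).

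The main obstacle is not mathematical depth but bookkeeping across bases: I must make sure that {\tt T21}, after {\tt rsplit}, really represents $\bf t$ on $\Enn$ extended by zero on the negatives (so that the $j\geq 0$ guard is doing what I claim), and that all quantifiers in a {\tt ?msd\_neg\_2} formula range over $\Zee$ as the paper warns in Section~\ref{six}. A secondary worry is automaton blowup from the alternating quantifier structure with an inner universal, but since $\bf t$ and both extensions are $2$-automatic and their factor complexity is linear, the determinization and minimization steps should remain manageable. Once {\tt Walnut} returns {\tt TRUE} for the two pairs, the triple equivalence follows by transitivity.
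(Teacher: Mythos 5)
Your proposal is essentially the same as the paper's approach: one direction is trivial because ${\bf t}'$ and ${\bf t}''$ agree with $\bf t$ on $\Enn$, and the substantive direction is a single {\tt Walnut} query of the form ``for all $i\in\Zee$ and all $n$, there exists $j\geq 0$ such that the length-$n$ factor at $i$ equals the one at $j$.'' The only cosmetic difference is that the paper compares {\tt TM21} (resp.\ {\tt TM22}) against itself and relies on the $j\geq 0$ guard to force the right-hand occurrence into the non-negative (hence $\bf t$-) part, whereas you would substitute the {\tt rsplit}-built {\tt T21} on the right; since {\tt TM21}$[m]={\tt T21}[m]$ for $m\geq 0$, these are equivalent.
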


\begin{proof}
   We can prove this with {\tt Walnut} as follows:
\begin{verbatim}
def tm21faceq "?msd_neg_2 At (t>=0 & t<n) => TM21[i+t]=TM21[j+t]":
def tm22faceq "?msd_neg_2 At (t>=0 & t<n) => TM22[i+t]=TM22[j+t]":
eval tmtest1 "?msd_neg_2 Ai,n Ej (j>=0) & $tm21faceq(i,j,n)":
eval tmtest2 "?msd_neg_2 Ai,n Ej (j>=0) & $tm22faceq(i,j,n)":
\end{verbatim}
\noindent and {\tt Walnut} returns
{\tt TRUE} for both.
\end{proof}

We now use the automaton for ${\bf t}'$ to reprove a result of Shur \cite{Shur:2000}, namely
\begin{theorem}
There exists a bi-infinite binary word that avoids
$({7 \over 3}+\epsilon)$-powers, and has unequal frequencies of letters.
\label{shur}
\end{theorem}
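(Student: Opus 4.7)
The plan is to prove Theorem~\ref{shur} by exhibiting an explicit base-$(-2)$ automatic bi-infinite binary word and using Walnut to verify its properties. For the $(7/3+\epsilon)$-power avoidance the natural candidate is ${\bf t}'$, whose base-$(-2)$ DFAO \texttt{TM21} was just built. A word $w$ contains a $(7/3+\epsilon)$-power iff some factor of it has length $n$ and period $p \geq 1$ with $3n > 7p$; since our variables now range over $\Zee$, this is expressible directly as
\begin{verbatim}
eval shur73 "?msd_neg_2 ~Ei,n,p (p>=1) & (3*n > 7*p) &
     (Aj (j>=0 & j<n) => TM21[i+j]=TM21[i+j+p])":
\end{verbatim}
A \texttt{TRUE} answer from Walnut proves that ${\bf t}'$ is a bi-infinite $(7/3+\epsilon)$-free binary word, establishing the first half of the theorem.

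For the unequal-frequencies claim, ${\bf t}'$ itself is balanced (both halves are shifts of Thue-Morse and hence have density $1/2$ of each letter), so a different $(-2)$-automatic bi-infinite word ${\bf u}$ is needed. Using \texttt{rsplit} and \texttt{join}, we glue an asymmetric $(7/3+\epsilon)$-free right-infinite binary word---for instance a suitable morphic image of Thue-Morse under a non-uniform morphism---onto one side of the origin, keeping a Thue-Morse-like word on the other. Re-running \texttt{shur73} on the resulting DFAO re-verifies power-freeness, and the unequal letter densities are then read off from the incidence-matrix eigenvector of the underlying morphism, independently of Walnut.

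The principal obstacle is this second step: Walnut has no built-in notion of asymptotic letter density, so we must hand-pick a $(-2)$-automatic asymmetric $(7/3+\epsilon)$-free ingredient whose density we can compute and verify that gluing it at the origin creates no factor of exponent exceeding $7/3$. Once that ingredient is pinned down, the Walnut verification reduces to re-running the \texttt{shur73} query on the new automaton; the power-freeness half of the argument is essentially mechanical given the Section~\ref{five} machinery.
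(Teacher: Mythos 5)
Your first half is fine but somewhat beside the point: verifying that ${\bf t}'$ avoids $(7/3+\epsilon)$-powers is already immediate from the earlier {\tt Walnut} verification that ${\bf t}'$ has exactly the factors of $\bf t$, and $\bf t$ is overlap-free, hence $2^+$-power-free, hence $(7/3)^+$-power-free. As you correctly observe, however, ${\bf t}'$ has balanced letter frequencies, so it cannot witness the theorem, and the whole burden of the proof lies in producing a concrete bi-infinite word that does.

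That is where your proposal has a genuine gap. You propose to ``glue an asymmetric $(7/3+\epsilon)$-free right-infinite binary word \ldots onto one side of the origin, keeping a Thue-Morse-like word on the other,'' then re-run the power-freeness check; but you never say which asymmetric ingredient to take, and you acknowledge this as the ``principal obstacle.'' That ingredient \emph{is} the theorem: finding a binary word simultaneously $(7/3)^+$-power-free and with unbalanced letter densities is precisely the nontrivial content, and ``a suitable morphic image of Thue-Morse under a non-uniform morphism'' does not pin it down. The paper instead uses a specific $27$-uniform morphism $\xi : \{0,1,2\}^* \to \{0,1\}^*$ from Shallit--Shur (2019), applies it to a single coherent bi-infinite ternary word ${\bf vtm2}$ (the first-difference, mod 3, of the bi-infinite Thue--Morse word ${\bf t}'$, generalizing the variant Thue-Morse word), and then verifies $(7/3+\epsilon)$-power-freeness of the image with one {\tt Walnut} query. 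Because $\xi$ is uniform and each block $\xi(a)$ has $14$ zeros and $13$ ones, the resulting word has densities $14/27$ and $13/27$ of the two letters uniformly throughout; no gluing of two disparate halves at the origin is needed, and no separate junction analysis is required. Your gluing strategy would additionally have to show that the seam itself introduces no large powers and would leave the density argument muddier (two halves with different one-sided densities), whereas the paper's single-morphism construction avoids both issues by design. To complete your proof you would still have to discover the morphism and the ternary pre-image, at which point you would essentially have reproduced the paper's argument.
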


Let us recall the relevant definitions.
We say a finite word $x = x[1..n]$ of length $n$ has {\it period $p$} if $x[i]=x[i+p]$ for
$1 \leq i \leq n-p$.
Let $p > q \geq 1$ be integers.  We say that a word $x$ is a {$(p/q)$-power}
if $x$ is of length $p$ and has period
$q$.  Let $\alpha > 1$ be a real number.  We say an infinite word $\bf x$ is
$\alpha$-power-free if $\bf x$ has no finite factor that is a $(p/q)$-power, for $p/q \geq \alpha$, and we say it is
$\alpha^+$-power-free if $\bf x$ has no finite factor that is a $(p/q)$-power, for $p/q > \alpha$.   For example, an overlap-free word is $2^+$-power-free.

\begin{proof}
The idea is to use the morphism given in
\cite[Theorem 33]{Shallit&Shur:2019}; it was
called $g$ there, but we call it $\xi$ here:
\begin{align*}
    \xi(0) &= 011001001101001011010011001 \\
    \xi(1) &= 011001001101001011001101001 \\
    \xi(2) &= 011001001101001100101101001
\end{align*}
First we create a bi-infinite ternary word ${\bf vtm2}$ from the
Thue-Morse word; this is a generalization of the so-called {\it variant Thue-Morse} word
{\bf vtm}.  Then we apply $\xi$ to it.
However, it is easy to see that the image of each
letter has $14$ $0$'s and $13$ $1$'s, so the
resulting word has unequal letter frequencies.
We then assert that the resulting word has a
$({7 \over 3}+\epsilon)$-power with {\tt Walnut}, as
follows:
\begin{verbatim}
morphism xi "0 -> 011001001101001011010011001 
1 -> 011001001101001011001101001 
2 -> 011001001101001100101101001":
# Shur's 27-uniform morphism

def vtm0 "?msd_neg_2 TM21[n]+1=TM21[n-1]":
def vtm1 "?msd_neg_2 TM21[n]=TM21[n-1]":
def vtm2 "?msd_neg_2 TM21[n]=TM21[n-1]+1":
combine VTM2 vtm0=0 vtm1=1 vtm2=2:

image SHUR xi VTM2:
# apply xi to VTM2 

eval testshur "?msd_neg_2 Ei,n (n>=1) & At (t>=i & 3*t<=3*i+4*n) => 
   SHUR[t]=SHUR[t+n]":
\end{verbatim}
And {\tt Walnut} returns {\tt FALSE}, so the
word is indeed $({7\over 3}+\epsilon)$-free.
\end{proof}

This was a big calculation in {\tt Walnut}, using
5468 seconds of CPU time and 400G of RAM.

\section{Shevelev's first sequence}
\label{seven}

Now that we have extended {\tt Walnut} to handle inputs
with domain $\Zee$ instead of $\Enn$, we are ready to
solve Shevelev's problems on bi-infinite sequences.  

Vladimir Shevelev  \cite{Shevelev:2017} considered the following natural analogue of the Thue-Morse sequence for base $-2$:
let $g(n)$ denote the number of $1$'s in the
base-$(-2)$ representation of $n$, taken modulo $2$.
Table~\ref{gtab} gives the first few
values of $g$.
\begin{table}[H]
\begin{center}
    \begin{tabular}{c|rrrrrrrrrrrrrrr}
    $n$ & $-7$ & $-6$ & $-5$ & $-4$ & $-3$ & $-2$ & $-1$ & 0 & 1 & 2 & 3 & 4 & 5 & 6 & 7 \\
    \hline
    $g(n)$ & 0 & 1 & 0 & 0 & 1 & 1 & 0 & 0 & 1 & 0 & 1 & 1 & 0 & 1 & 0
\end{tabular}
\end{center}
\caption{Some values of $g$}
\label{gtab}
\end{table}
The $(-2)$-automaton for this sequence is given in Figure~\ref{gaut}.   For $n \geq 0$, this is sequence 
\seqnum{A269027} in the {\it On-Line Encyclopedia of Integer Sequences} (OEIS) \cite{Sloane:2022}.
\begin{figure}[H]
\begin{center}
    \includegraphics[width=3in]{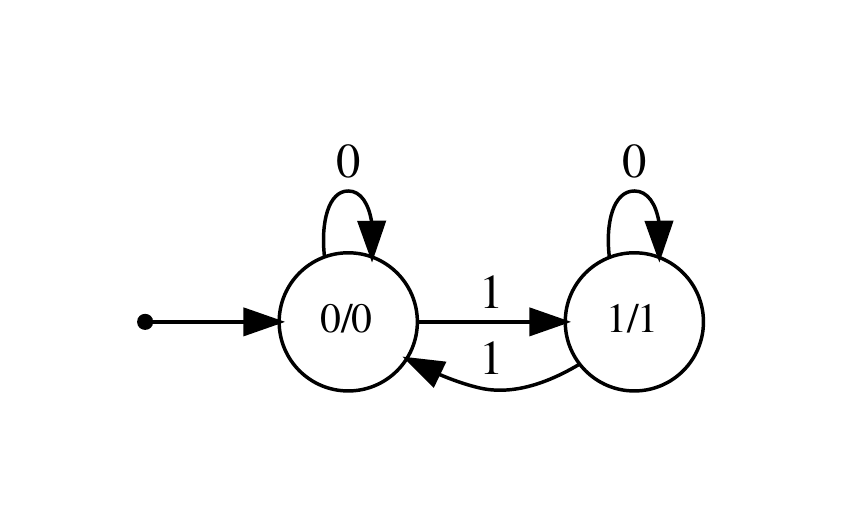}
    \end{center}
    \caption{Automaton for Shevelev's $g$ in base $-2$.}
    \label{gaut}
\end{figure}

Shevelev proved that the bi-infinite sequence
${\bf g} = (g(n))_{n \in \Zee}$ contains no cubes, that is, three consecutive identical blocks of digits.   His proof was quite complicated.

We can prove Shevelev's result and strengthen it as follows:
\begin{theorem}
The sequence $\bf g$ contains no overlaps.
\end{theorem}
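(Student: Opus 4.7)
The plan is to express "$\mathbf{g}$ has no overlap" as a first-order sentence over $\Zee$ and hand it to the newly extended {\tt Walnut}. The DFAO in Figure~\ref{gaut} exhibits $\mathbf{g}$ as a $(-2)$-automatic sequence, so the hypotheses of the decision procedure built in Sections~\ref{four} and~\ref{five} are in place; in particular, the base-$(-2)$ adder from Theorem~\ref{lemma_adder} and the comparison automaton from Theorem~\ref{lemma_less_than} allow {\tt Walnut} to interpret $+$, $<$, and $\leq$ correctly on all of $\Zee$.

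Imitating Eq.~\eqref{one}, the assertion that $\mathbf{g}$ contains no overlap translates to
\[
\neg\, \exists i, n\ (n \geq 1)\, \wedge\, \bigl(\forall k\ (0 \leq k \leq n) \implies g(i+k) = g(i+k+n)\bigr).
\]
The crucial subtlety is that, under the {\tt ?msd\_neg\_2} prefix, the quantifiers now range over all integers, so this formula genuinely asserts overlap-freeness of the bi-infinite word and not merely of its one-sided half. Concretely, the {\tt Walnut} command would look like
\begin{verbatim}
eval g_no_overlap "?msd_neg_2 ~Ei,n (n>=1) &
   Ak (k>=0 & k<=n) => G[i+k]=G[i+k+n]":
\end{verbatim}
and we expect it to return {\tt TRUE}, proving the theorem.

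Since every nonempty cube $yyy$ contains the overlap obtained by appending the first letter of $y$ to $yy$, the conclusion is strictly stronger than Shevelev's original cube-freeness result for $\mathbf{g}$, so this single {\tt Walnut} call simultaneously recovers and strengthens his theorem. I anticipate no mathematical obstacles: the only real risks are computational (state explosion in intermediate products and projections) and, in principle, subtle correctness issues with the prototype's implementation of quantification over $\Zee$. Both risks look minor here, because the DFAO in Figure~\ref{gaut} has only a handful of states and the formula has a very small quantifier alternation depth, so the decision procedure should terminate quickly.
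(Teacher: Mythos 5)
Your proposal is essentially identical to the paper's proof: both express overlap-freeness of $\mathbf{g}$ as a first-order sentence under the {\tt ?msd\_neg\_2} prefix and invoke the extended {\tt Walnut} on a single {\tt eval} command. The only difference is cosmetic---you negate the existential and expect {\tt TRUE}, while the paper asserts the existence of an overlap and observes that {\tt Walnut} returns {\tt FALSE}---which are logically equivalent formulations of the same check.
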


\begin{proof}
We use the new version of {\tt Walnut} that can use negative bases.   We run the following command, which checks the assertion that there are overlaps.  {\tt Walnut} returns {\tt FALSE}, so there are no overlaps.
\begin{verbatim}
eval shevelev "?msd_neg_2 Ei,n (n>=1) & At (t>=0 & t<=n) => G[i+t]=G[i+t+n]":
\end{verbatim}
\end{proof}

\section{Shevelev's two open problems}
\label{eight}

In this section we show how, using
{\tt Walnut}, to prove
Shevelev's two open problems, labeled (C)
and (D) in his paper
\cite{Shevelev:2017}.  They concern
two sequences $v$ and $w$.

\subsection{The $v(n)$ sequence}
\label{vnseq}
Shevelev also studied the following sequence:
define $v(n)$ to be the largest integer $k$
such that $g(i) = t(n+i)$ for $0 \leq i < k$, where
$t(n)$ is the Thue-Morse sequence.   This sequence
is $2$-synchronized in the sense of 
\cite{Shallit:2021b}; this means there is an automaton accepting precisely the base-$2$
representations of $n$ and $v(n)$ in parallel, where we pad the shorter one with leading zeros.

We can construct this automaton with {\tt Walnut} as follows:
first, we extract the values of $g$ on $\Enn$
using the new {\tt split} command.  It produces a new $2$-automaton {\tt SH} 
for $g(n)$.  Of course {\tt SH} can only
compute $g(n)$ for non-negative values
of $n$.  We do it with the following
command:
\begin{verbatim}
split SH G [+]:
\end{verbatim}
This produces the automaton in
Figure~\ref{shev}.
\begin{figure}[H]
\begin{center}
    \includegraphics[width=6in]{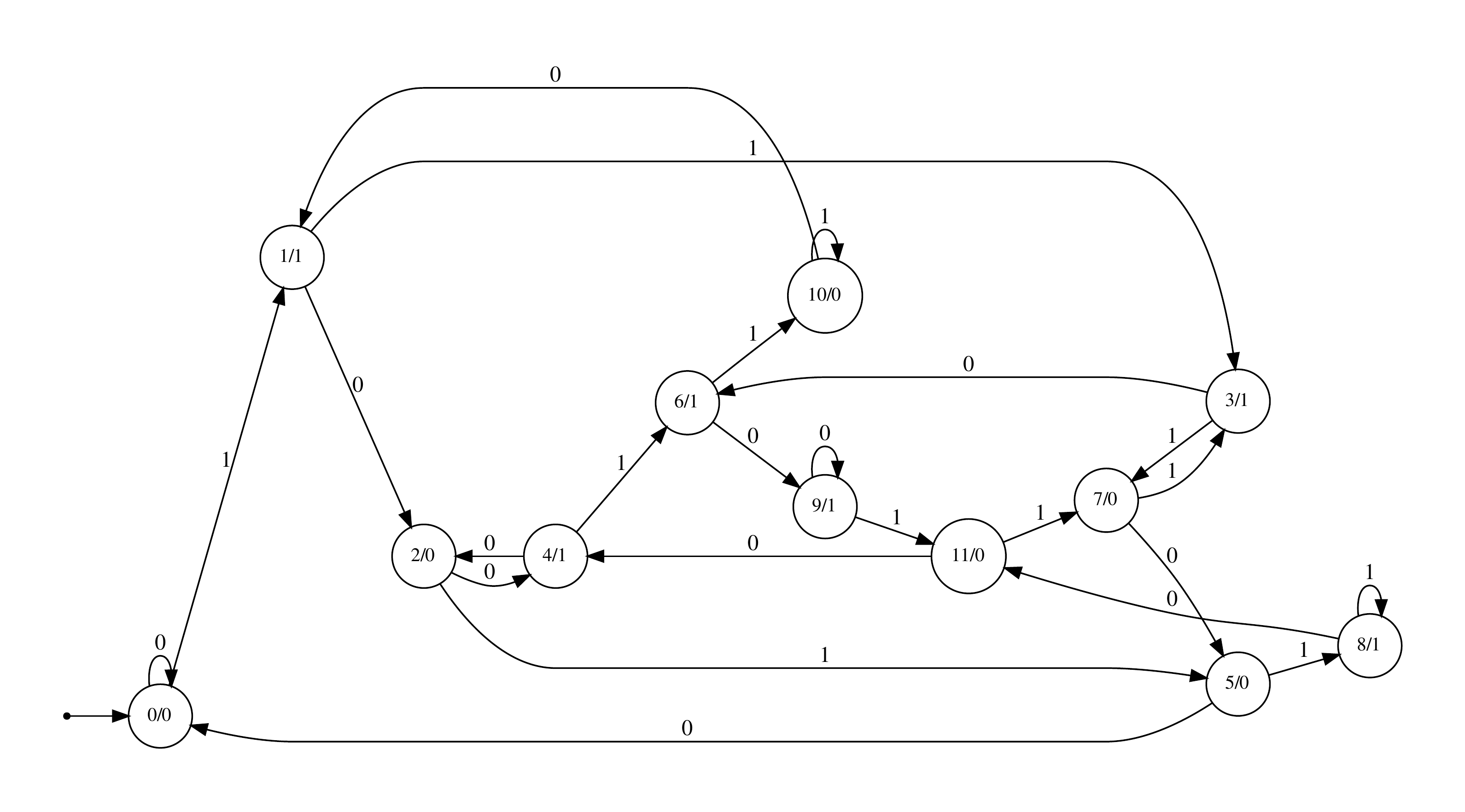}
    \end{center}
    \caption{$2$-automaton for $g(n)$.}
    \label{shev}
\end{figure}

Shevelev studied the ``record-setting'' values of
$v(n)$; these are the values $v(n)$ for which
$v(n) > v(i)$ for all $i < n$.  Shevelev defined the sequence of record-setting values as $(a(n))_{n \geq 1}$ and
the position in which they occur in $v(n)$
as $(l(n))_{n \geq 1}$.  Thus $v(l(n)) = a(n)$
for $n \geq 1$.  The first few
record-setting values are given in Table~\ref{ttwo}.
Sequence $(a(n))_{n \geq 1}$ is \seqnum{A268866} in the
OEIS.
\begin{table}[htb]
\begin{center}
\begin{tabular}{c|cc}
$n$ & $a(n)$ & $l(n)$ \\
\hline
0 & 2 & 0 \\
1 & 3 & 3 \\
2 & 22 & 10 \\
3 & 38 & 58\\
4 & 342 & 170 \\
5 & 598 & 938
\end{tabular}
\end{center}
\caption{Record-setters for $v(n)$.}
\label{ttwo}
\end{table}

The pairs $(a(n),l(n))$ are $2$-synchronized.
We can demonstrate this by creating an automaton
accepting the pairs $(a(n),l(n))_2$ in parallel.
\begin{verbatim}
def agree "Ai (i<k) => SH[i]=T[n+i]":
def vseq "$agree(k,n) & ~$agree(k+1,n)":
def recordv "$vseq(k,n) & Aj,r (r<n & $vseq(j,r)) => (j<k)": 
\end{verbatim}

The resulting automaton {\tt recordv.txt} is displayed in Figure~\ref{rec}.   
\begin{figure}[H]
\begin{center}
    \includegraphics[width=6in]{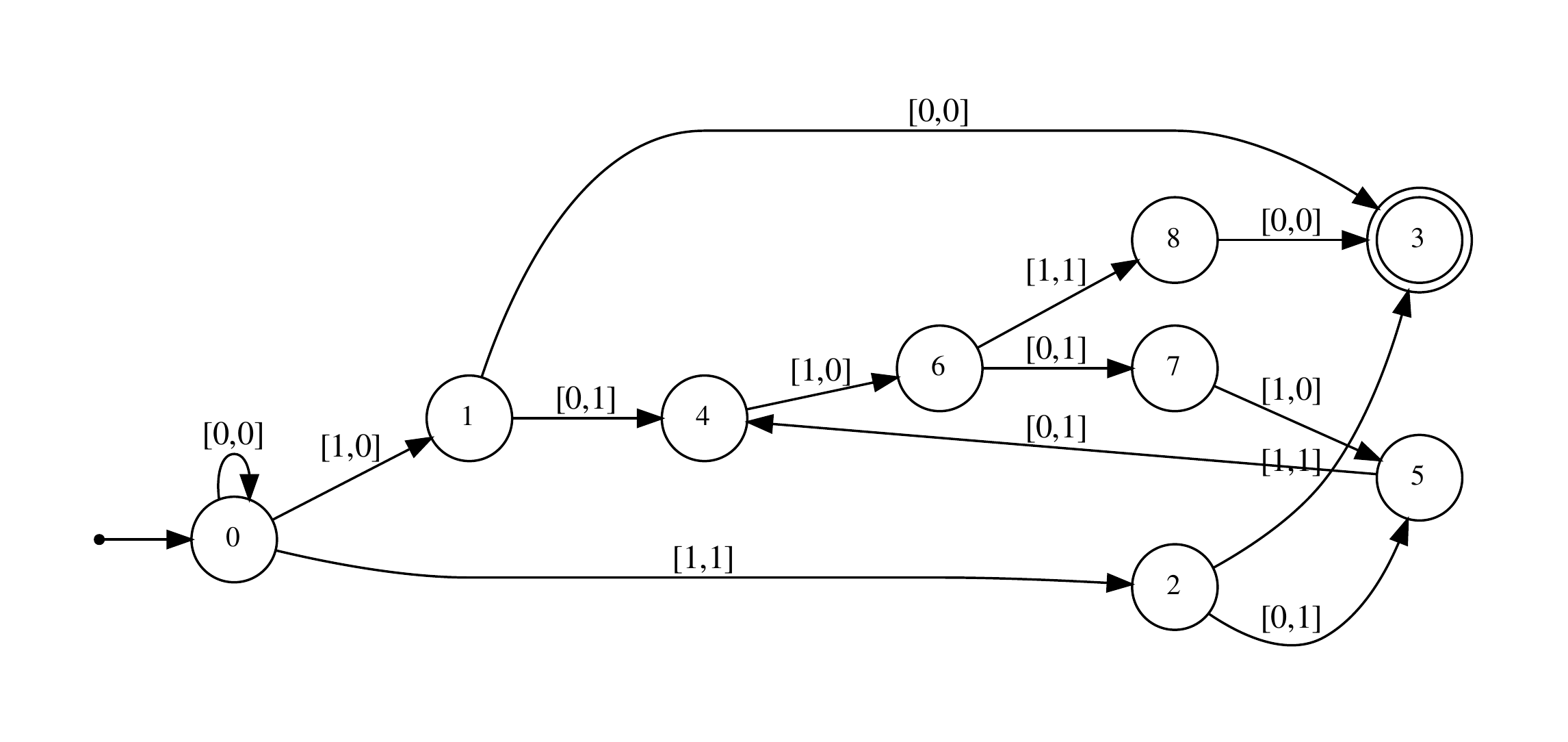}
    \end{center}
    \caption{$2$-automaton {\tt recordv.txt}.}
    \label{rec}
\end{figure}
We can now examine the acceptance paths in this automaton.  They are of four types:
\begin{align*}
& [1,0][0,0] \\
& [1,1][1,1] \\
& [1,0][0,1]([1,0][0,1][1,0][0,1])^i[1,0][1,1][0,0] \\
& [1,1][0,1][0,1][1,0]([0,1][1,0][0,1][1,0])^i[1,1][0,0] 
\end{align*}
for $i \geq 0$, representing the numbers
\begin{align*}
& (2,0) \\
& (3,3) \\
& (2^{4i+6}+2)/3, (2^{4i+5}-2)/3) \\
& ((4+7\cdot 2^{4i+5})/6, (11\cdot 2^{4i+5}-4)/6).
\end{align*}

Set $i = (n-2)/2$ for $n \geq 2$ even,
and $i = (n-3)/2$ for $n \geq 3$ odd.
Then from above
we have 
\begin{align*}
\ell(0) &= 0 \\
\ell(1) &= 3 \\
\ell(n) &= (2^{2n+1}-2)/3 \text{ for $n\geq 2$
even, and} \\
\ell(n) &= (11\cdot 2^{2n-1}-4)/6 \text{ for
$n \geq 3$ odd.}
\end{align*}  
Similarly,
\begin{align*}
a(0) &= 2 \\
a(1) &= 3 \\ 
a(n) &= (2^{2n+2}+2)/3 \text{ for $n \geq2$
even, and} \\
a(n) &= (7\cdot 2^{2n-1} + 4)/6 \text{ 
for $n \geq 3$ odd.}
\end{align*}
These statements are equivalent to the statement of open problem (C) in
Shevelev's paper, which is now proved.

\subsection{The $w(n)$ sequence}
\label{wnseq}

Shevelev also studied a ``dual problem''.  Define
 $w(n)$ to be the largest integer $k$ such
that $g(i) \not= t(n + i)$ for $0 \leq i < k$, where again $t(n)$ is the Thue-Morse sequence.  He then studied
the ``record-setting'' values of
$w(n)$; these are the values
$w(n)$ such that $w(n) > w(i)$ for all
$i < n$.  Enumerate the sequence of record-setting values as $(b(n))_{n\geq 1}$ and the position where they occur
in $v(n)$ as $(m(n))_{n \geq 1}$.  The first few record-setting values are given in Table~\ref{ttab4}.
Sequence $(b(n))_{n \geq 1}$ is sequence
\seqnum{A269341} in the OEIS.
\begin{table}[htb]
\begin{center}
\begin{tabular}{c|cc}
$n$ & $b(n)$ & $m(n)$ \\
\hline
0 & 0 & 0 \\
1 & 1 & 1 \\
2 & 6 & 2 \\
3 & 10 & 14\\
4 & 86 & 42\\
5 & 150 & 234
\end{tabular}
\end{center}
\caption{Record-setters for $w(n)$.}
\label{ttab4}
\end{table}
We can find formulas for these numbers
just as we did in Section~\ref{vnseq}.

\begin{verbatim}
def disagree "Ai (i<k) => SH[i]!=T[n+i]":
def wseq "$disagree(k,n) & ~$disagree(k+1,n)":
def recordw "$wseq(k,n) & Aj,r (r<n & $wseq(j,r)) => (j<k)": 
\end{verbatim}

The resulting automaton {\tt recordw.txt} is displayed in Figure~\ref{recw}.   
\begin{figure}[H]
\begin{center}
    \includegraphics[width=6in]{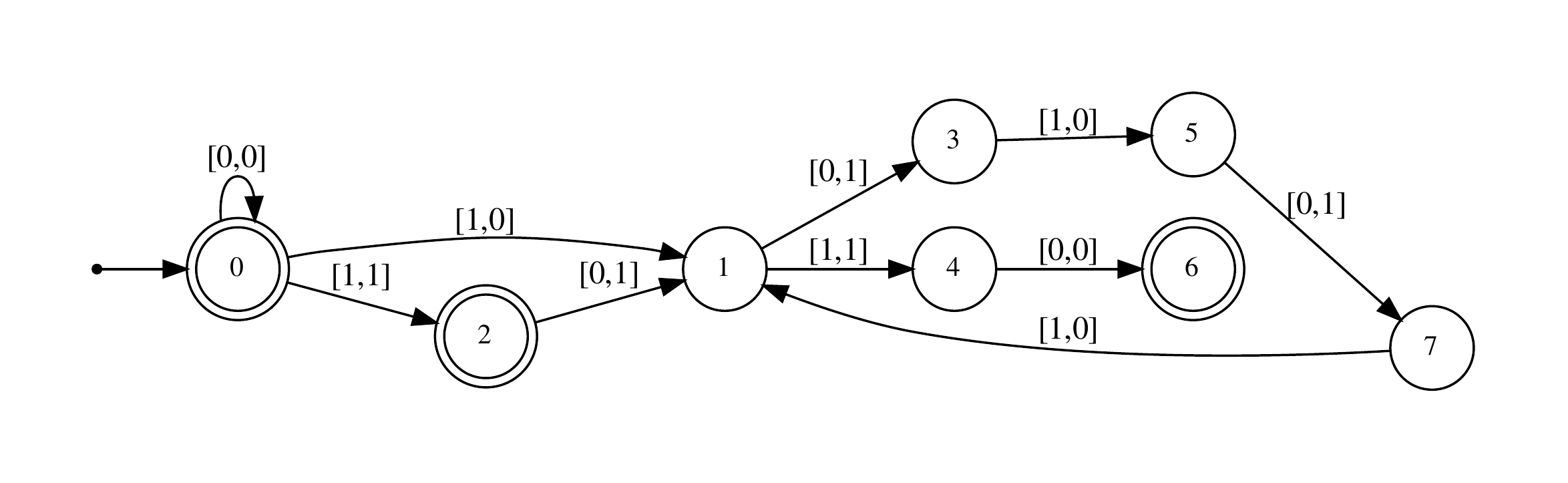}
    \end{center}
    \caption{$2$-automaton {\tt recordw.txt}.}
    \label{recw}
\end{figure}
We can now examine the acceptance paths in this automaton.  They are of four types:
\begin{align*}
& [0,0] \\
& [1,1] \\
& [1,0]([0,1][1,0][0,1][1,0])^i[1,1][0,0] \\
& [1,1][0,1]([0,1][1,0][0,1][1,0])^i[1,1][0,0] \end{align*}
for $i \geq 0$, representing (in binary) the pairs
\begin{align*}
& (0,0) \\
& (1,1) \\
& (2^{4i+4} + 2)/3, (2^{4i+3}-2)/3) \\
& ((7\cdot 2^{4i+3} +4)/6, (11\cdot 2^{4i+3}-4)/3).
\end{align*}
Set $i = (n-2)/2$ for $n \geq 2$ even,
and $i = (n-3)/2$ for $n \geq 3$ odd. 
Then
from above we have
\begin{align*}
m(0) &= 0 \\
m(1) &= 1 \\
m(n) &= (2^{2n-1} - 2)/3 \text{ for $n \geq 2$ even, and} \\
m(n) &= (11 \cdot 2^{2n-3} -4)/6 \text{ for $n\geq 3$ odd.}
\end{align*} 
Similarly,
\begin{align*}
b(0) &= 0 \\
b(1) &= 3 \\
b(n) &= (2^{2n} + 2)/3 \text{ for $n \geq 2$ even, and} \\
b(n) &= (7\cdot 2^{2n-3} +4)/6 \text{ for
$n \geq 3$ odd.}
\end{align*}
These two formulas are equivalent to 
Shevelev's open problem (D), which is
now proved.

\section{NegaFibonacci representation}
\label{nine}

Previous sections have discussed representation in 
base-$(-k)$.  In this section we discuss a more
exotic numeration system, called negaFibonacci
representation.

In the ordinary Fibonacci numeration system (also called the Zeckendorf numeration system), a natural number
$n$ is expressed as a sum
$$ n = \sum_{2 \leq i \leq t} e_i F_i,$$
where $F_0 =0$, $F_1 = 1$, $F_n = F_{n-1} + F_{n-2}$
are the Fibonacci numbers and $e_i \in \{0,1\}$.  
Such a representation is unique (up to leading
zeros) provided $e_i e_{i+1} \not=1$ for
$2 \leq i < t$.   The
binary word $e_t e_{t-1} \cdots e_2$ is called the
{\it canonical Fibonacci representation} for $n$ and
is denoted $(n)_F$. 
See, for example, \cite{Lekkerkerker:1952,Zeckendorf:1972}.
{\tt Walnut} can do calculations with numbers
represented in this numeration system; see
\cite{Mousavi&Schaeffer&Shallit:2016} for some examples.

As an example, consider the infinite (one-sided) Fibonacci
word
$$ {\bf f} = f_0 f_1 f_2 \cdots = 0100101001001 \cdots ,$$
which is the fixed point of the morphism
$0 \rightarrow 01$, $1 \rightarrow 0$ 
\cite{Berstel:1980b}.    It is known that for
$i \geq 0$, 
$f_i$ is equal to the least significant digit of the Fibonacci
representation of $i$, and hence $\bf f$
is Fibonacci-automatic.

As is well-known,
$\bf f$ is also a Sturmian characteristic word; more specifically,
if we define 
$${\bf s}_\theta = s_\theta(1) s_\theta(2) s_\theta(3) \cdots$$
where $0 < \theta < 1$ is a real irrational number and
$${\bf s}_\theta(n) = \lfloor (n+1)\theta \rfloor- 
\lfloor n \theta \rfloor,$$
then 
\begin{equation}
   f_i = s_\gamma(i+1) \label{fibb} 
\end{equation} 
for 
$\gamma = (3-\sqrt{5})/2$ and $i \geq 0$.

We now consider an extension of the one-sided word $\bf f$ to a two-sided
(bi-infinite) word.   There are essentially two different ways to do this; see \cite{Rosema&Tijdeman:2005,Du&Su:2020,Labbe&Lepsova:2021}.

One natural extension of $\bf f$ to a two-sided
or bi-infinite word is to use the equality
\eqref{fibb} for {\it all\/} integers $i$, not just for
$i \geq 0$.  Setting $g_i = s_\gamma(i+1)$ for
all $i \in \Zee$
gives us the following infinite word:
$${\bf g} = (g_i)_{i \in \Zee} = \cdots 010100100101001010.01001010010010100 \cdots .$$
Notice that, from the definition, the words $\bf f$ and $\bf g$ coincide on non-negative indices.

On the other hand, the Fibonacci numeration system can be extended
to a representation for all integers, not just
non-negative integers.  This extension, sometimes called the nega\-Fibonacci system, expresses
$n$ as a sum
$$ n = \sum_{1 \leq i \leq t} e_i (-1)^{i+1} F_i.$$
Bunder \cite{Bunder:1992} proved that
this is a unique
representation for all integers 
(up to leading zeros), provided
$e_i e_{i+1} \not=1$.  In this case the binary word
$e_t e_{t-1} \cdots e_1$ is called the {\it canonical
nega\-Fibonacci representation} for $n$ and is denoted 
$(n)_{-F}$.
Table~\ref{tab5} gives some examples of this
representation.
\begin{table}[H]
\begin{center}
\begin{tabular}{c|c} 
$n$ & $(n)_{-F}$ \\
\hline
$-8$ & 100000 \\
$-7$ &  100001 \\
$-6$ &  100100 \\
 $-5$ &  100101   \\
 $-4$ &  1010   \\
  $-3$ &  1000   \\
  $-2$ &  1001  \\
  $-1$ &  10     \\
  0 &  $\epsilon$      \\
  $1$ &  1   \\
  $2$ &  100   \\
  $3$ &  101  \\
  $4$ &  10010  \\
  $5$ &  10000  \\
  $6$ &  10001  \\
 $7$ &  10100  \\
$8$ & 10101 
 \end{tabular}
\end{center}
\caption{Examples of negaFibonacci representation.}
\label{tab5}
\end{table}

An alternative extension of $\bf f$ to a two-sided
word could then be defined as the least significant
bit of the negaFibonacci representation of $n$:  namely,
$$ {\bf h} = (h_i)_{i \in \Zee} =  \cdots 00101001001010010.010100101001001010 \cdots .$$
Here the relationship between $\bf f$ and $\bf h$ is slightly 
less apparent.

Our goal is to show the relationship between the
infinite words ${\bf f}$,  ${\bf g}$, and ${\bf h}$.   
Namely,
\begin{theorem}
   We have 
   \begin{itemize}
   \item[(a)] $g_i = g_{-3-i}$ for $i \not\in \{-2,-1\}$;
   \item[(b)] $h_i = h_{1-i}$ for $i \not\in \{ 0,1 \}$;
    \item[(c)] $g_{i-1} = h_{i+1}$ for $i \not\in \{-1,0 \}$.
   \end{itemize}
\end{theorem}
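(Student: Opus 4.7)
The plan is to prove all three identities mechanically by extending {\tt Walnut} to negaFibonacci representation, following the blueprint of Section~\ref{four}, and then evaluating quantified statements in $\Zee$. The main setup work consists of constructing the adder, comparator, and Fibonacci$\leftrightarrow$negaFibonacci conversion automata for the negaFibonacci system; these are closely analogous to their base $(-k)$ counterparts, with the added bookkeeping that valid words must avoid the forbidden pattern $\tt 11$. Given this infrastructure, $(-F)$-automata $\tt G$ and $\tt H$ for $\mathbf{g}$ and $\mathbf{h}$ will then suffice, since (a), (b), and (c) are first-order.

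For $\mathbf{h}$ the $(-F)$-automaton is immediate from the definition: output the least significant digit of the negaFibonacci representation. For $\mathbf{g}$ I would split into the non-negative and negative halves. On $n \geq 0$ we have $g_n = f_n$, so we can take the Fibonacci automaton for $\mathbf{f}$ and push it through a negaFibonacci analogue of the {\tt rsplit} command of Section~\ref{six}. For $n < 0$ we use the Sturmian characterization $g_n = \lfloor(n+2)\gamma\rfloor - \lfloor(n+1)\gamma\rfloor$ with $\gamma = (3-\sqrt{5})/2$; alternatively, one can first verify (a) by hand using the reflection symmetry of the Sturmian representation and then define the negative half by $g_n := g_{-3-n}$. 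The two halves are then merged with a negaFibonacci {\tt join}.

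Once {\tt G} and {\tt H} exist, the three identities translate directly to
\begin{verbatim}
eval parta "?msd_neg_fib Ai (i!=_1 & i!=_2) => G[i]=G[_3-i]":
eval partb "?msd_neg_fib Ai (i!=0 & i!=1) => H[i]=H[1-i]":
eval partc "?msd_neg_fib Ai (i!=_1 & i!=0) => G[i-1]=H[i+1]":
\end{verbatim}
and {\tt Walnut} returning {\tt TRUE} on each suffices. The hardest step is building the $(-F)$-automaton for $\mathbf{g}$ on the negative integers: the Sturmian slope $\gamma$ is typically handled via Ostrowski (Fibonacci) numeration, and one must argue that the corresponding synchronized relation survives the passage to negaFibonacci. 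If this proves inconvenient, I would instead verify (a) as a standalone lemma using the known palindromic structure of the Fibonacci word and apply (a) to define $g_n$ for $n<0$ directly, after which (b) and (c) reduce to routine {\tt Walnut} checks against the automaton for $\mathbf{h}$.
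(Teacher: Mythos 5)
Your proposal follows essentially the same route as the paper: build negaFibonacci DFAOs for $\mathbf{g}$ and $\mathbf{h}$, then verify (a)--(c) as first-order {\tt Walnut} statements over $\Zee$. The paper realizes the step you flag as hardest exactly as you outline it---it starts from the Fibonacci-synchronized automaton {\tt phin} for $(n,\lfloor\alpha n\rfloor)$, derives synchronized automata for $\lfloor\gamma n\rfloor$ (for $n\geq 0$) and $-\lfloor-\gamma n\rfloor$ (to cover $n<0$ after reflection), pushes both through {\tt rsplit} and {\tt join} to get a single negaFibonacci-synchronized automaton {\tt GG}, and reads off $g_n=\lfloor(n+2)\gamma\rfloor-\lfloor(n+1)\gamma\rfloor$; $\mathbf{h}$ is just the regular expression for negaFibonacci words ending in $\tt 1$, as you say. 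One caution about your fallback: defining $g_n:=g_{-3-n}$ for $n<0$ would make (a) true by fiat rather than proved, so you would still owe an independent argument that this agrees with $s_\gamma(n+1)$ on the negatives, which is essentially the content the synchronized-automaton construction delivers.
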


\begin{proof}
We start with the Fibonacci-synchronized automaton 
{\tt phin} given in \cite{Shallit:2021b} for $(n, \lfloor \alpha n \rfloor)$, $n \geq 0$, where $\alpha = (1+\sqrt{5})/2$.

From it we construct a Fibonacci-synchronized automaton
{\tt gn}
for $(n, \lfloor \gamma n \rfloor)$ for $n \geq 0$.
For $n \geq 1$ we have
$$ \lfloor \gamma n \rfloor = 
\lfloor (2-\alpha) n \rfloor =
\lfloor 2n - \alpha n \rfloor =
2n + \lfloor -\alpha n \rfloor =
2n - 1 - \lfloor \alpha n \rfloor,$$
where we have used the fact that
$\lfloor x \rfloor + \lfloor -x \rfloor = -1$ if
$x$ is not an integer.

We also construct a Fibonacci-synchronized
automaton {\tt gmn} for
$(n, -\lfloor -\gamma n \rfloor)$ for $n \geq 0$.

Next, we convert these two Fibonacci automata into two negaFibonacci automata, one for the non-negative values
of $n$ and one for the negative values.

We then join the two negaFibonacci automata into one called {\tt GG}
for $(n,\lfloor n \gamma \rfloor)$ that covers the entire range of $n$:  positive, negative, and zero.

From this we get another negaFibonacci automaton computing $\lfloor (n+1) \gamma \rfloor -
\lfloor n \gamma \rfloor$ for all $n \in \Zee$.

We create a negaFibonacci DFAO G2 for $g$ and
H2 for $h$, and then verify the identities (a)-(c).
Here is the {\tt Walnut} code:

\begin{verbatim}
reg shift {0,1} {0,1} "([0,0]|[0,1][1,1]*[1,0])*":
def phin "?msd_fib (s=0 & n=0) | Ex $shift(n-1,x) & s=x+1":
# Fibonacci synchronized automaton for (n, floor(n*alpha))

def gn "?msd_fib (n>=1 & Ex $phin(n,x) & s+x+1=2*n) | (n=0&s=0)":
# Fibonacci synchronized automaton for (n, floor(n*gamma))

combine GN gn:
def gmn "?msd_fib (n=0&s=0) | (n>=1) & $gn(n,s-1)":
combine GMN gmn:

rsplit GN2 [+] [+] GN:
rsplit GMN2 [-] [-] GMN:

join GG GN2[x][y] GMN2[x][y]:
# GG[x][y] is negaFibonacci synchronized DFAO for (n, floor(n*gamma))

def st "?msd_neg_fib Ex,y GG[n][x]=@1 & GG[n+1][y]=@1 & y=x+1":
# st[n] returns true if sturmian word at position n
# in negaFibonacci representation is equal to 1

def fb "?msd_neg_fib $st(n+1)":
# n'th position of the Fibonacci word, generalized to negaFibonacci
combine G2 fb:

eval parta "?msd_neg_fib Ai (G2[i]=G2[_(i+3)]) <=> (i!=_1 & i!=_2)":

reg ends1 msd_neg_fib "0*(0|10)*1":
combine H2 ends1:
eval partb "?msd_neg_fib Ai (H2[i]=H2[1-i]) <=> (i!=0 & i!=1)":

eval partc "?msd_neg_fib Ai (G2[i-1]=H2[i+1]) <=> (i!=0 & i!=_1)":
\end{verbatim}
{\tt Walnut} returns {\tt TRUE} when it evaluates
{\tt parta, partb, partc}.  This completes the proof.
\end{proof}

Here is another way to see that 
$\bf g$ and $\bf h$ are generalizations
of $\bf f$:
\begin{theorem}
    A word is a factor of $\bf f$ if and only if it is a factor of $\bf g$ if and only iff it is a factor of $\bf h$.
\end{theorem}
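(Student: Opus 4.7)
The plan is to combine the previous theorem with a single known property of the Fibonacci word, then let \texttt{Walnut} do the rest. The key observation is that $\bf f$ coincides with $\bf g$ on non-negative indices, so every factor of $\bf f$ is automatically a factor of $\bf g$; and part (c) of the previous theorem tells us $g_{i-1}=h_{i+1}$ for all but two exceptional values of $i$, so $\bf g$ and $\bf h$ are shifts of one another up to a finite-index discrepancy, which cannot change the set of factors of an infinite word. Thus it only remains to show every factor of $\bf g$ is also a factor of $\bf f$, and then transport the conclusion to $\bf h$ via (c).

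For that remaining inclusion, I would use part (a): since $g_i=g_{-3-i}$ outside $\{-2,-1\}$, the ``negative half'' of $\bf g$ is, up to finitely many positions, the reversal of the ``non-negative half'', which is $\bf f$. Hence every factor of $\bf g$ is either a factor of $\bf f$ or the reversal of a factor of $\bf f$. But $\bf f$ is a Sturmian characteristic word, and Sturmian words have factor sets closed under reversal (a standard classical fact). So in either case the factor appears in $\bf f$, giving the desired inclusion.

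Rather than assemble this argument by hand, the cleanest route in the spirit of the paper is to ask \texttt{Walnut} to verify the equivalences directly, exploiting the automata \texttt{G2} and \texttt{H2} already constructed in the proof of the previous theorem. Since the restriction of \texttt{G2} to non-negative indices computes $\bf f$, one would write, for each of the six directions of the biconditional, an \texttt{eval} of the form
\begin{verbatim}
eval g_in_f "?msd_neg_fib Ai,n (n>=1) => Ej (j>=0) &
    At (t>=0 & t<n) => G2[i+t]=G2[j+t]":
\end{verbatim}
together with the analogous queries mixing \texttt{G2} and \texttt{H2}, and the symmetric containment in the opposite direction. The essential point is that the quantifier \texttt{Ai,n} over \emph{all} integers is what lets us express ``every factor of a bi-infinite word,'' which is exactly the capability added to \texttt{Walnut} in Section~\ref{five}.

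The main obstacle will be setting up the containments correctly rather than proving anything substantive: one must be careful that ``factor of $\bf f$'' is encoded as ``factor of $\bf g$ starting at a non-negative index'' (as opposed to any index), and that the analogous alignment is used for $\bf h$. The handful of exceptional positions flagged in the previous theorem do not cause any difficulty because the existential quantifier on the starting position $j$ absorbs them. Once the formulas are stated properly, they reduce to routine regular-language emptiness checks on products of the already-built automata, and I expect \texttt{Walnut} to return \texttt{TRUE} on each.
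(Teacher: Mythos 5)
Your final strategy---run a \texttt{Walnut} query asserting that for all $i,n$ there is some $j\geq 0$ with $\mathbf g[i..i+n)=\mathbf g[j..j+n)$ (and its analogue comparing $\bf h$ with $\bf g$), encoding ``factor of $\bf f$'' as ``factor of $\bf g$ beginning at a non-negative index''---is exactly the paper's approach, so the mechanized core of your argument is fine.

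That said, the informal sketch you give before pivoting to \texttt{Walnut} contains two claims that are not true as stated, which is a good reason to abandon it, as you do. First, a finite-index discrepancy between two bi-infinite words \emph{can} change the factor set: compare the all-zeros word with the word that is zero everywhere except a single $1$ at the origin. Some recurrence hypothesis on $\bf g$ and $\bf h$ is needed to repair the claim, and you do not invoke one. Second, part (a) by itself does not cover factors of $\bf g$ that straddle the origin: a block $\mathbf g[-k..m]$ with $k,m\geq 1$ is, on its face, neither a factor of $\bf f$ nor the reversal of one, so the dichotomy ``every factor of $\bf g$ is a factor of $\bf f$ or the reversal of a factor of $\bf f$'' does not follow without additional work. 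One upside of your more cautious plan: you propose checking all six containments explicitly, whereas the paper runs only two \texttt{Walnut} queries (every factor of $\bf g$, and every factor of $\bf h$, occurs in $\bf f$) together with the observation $\mathbf f = \mathbf g|_{\geq 0}$; in particular the paper never explicitly verifies that every factor of $\bf f$ occurs in $\bf h$, so your extra queries would close a gap in the published argument.
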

\begin{proof}
Recalling that ${\bf f}[0..\infty) =
{\bf g}[0..\infty)$, we can use
{\tt Walnut} to verify these claims:
\begin{verbatim}
def gfactoreq "?msd_neg_fib At (t>=0 & t<n) => G2[i+t]=G2[j+t]":
def ghfactoreq "?msd_neg_fib At (t>=0 & t<n) => H2[i+t]=G2[j+t]":
eval test1 "?msd_neg_fib Ai,n Ej (j>=0) & $gfactoreq(i,j,n)":
eval test2 "?msd_neg_fib Ai,n Ej (j>=0) & $ghfactoreq(i,j,n)":
\end{verbatim}
\noindent and {\tt Walnut} returns {\tt TRUE} for both.
\end{proof}

We now turn to an application of these
ideas.   We say that a finite word $w$ is a (weak) quasiperiod of an infinite word $\bf x$ of $w$ covers $\bf x$ by translates, where we allow $w$ to partially ``fall off'' the left edge of $\bf x$.
Lev\'e and Richomme
\cite{Leve&Richomme:2004} determined the (weak) quasiperiods of the (one-sided) Fibonacci word $\bf f$. We can find a simpler but equivalent characterization of the (weak) quasiperiods using the bi-infinite Fibonacci word $\bf g$ as follows:
\begin{theorem}
Suppose $n \geq 1$ and $F_{j-1} \leq n < F_j$ for some 
natural number $j$.
The length-$n$ word $x$ is a quasiperiod of the bi-infinite Fibonacci word $\bf g$ if and only if there exists an index
$i$, $0 \leq i \leq F_j - n - 2$, such that
$x = {\bf g}[i..i+n-1]$.  
\end{theorem}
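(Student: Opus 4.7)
The plan is to encode the quasiperiodicity of a length-$n$ factor of $\mathbf{g}$ as a first-order formula in the negaFibonacci numeration system, and then use {\tt Walnut} to verify the claimed characterization. Since the negaFibonacci DFAO {\tt G2} for $\mathbf{g}$ has already been constructed, quantifiers range over all of $\Zee$, which is exactly what is needed to capture translates that may ``fall off the left edge''.

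First, I would define {\tt factoreq}$(i,k,n)$ (analogous to the earlier {\tt gfactoreq}) asserting that $\mathbf{g}[i..i+n-1] = \mathbf{g}[k..k+n-1]$, and then {\tt quasi}$(k,n)$ asserting that $\mathbf{g}[k..k+n-1]$ is a quasiperiod of $\mathbf{g}$:
\[
\forall p \in \Zee \; \exists i \in \Zee : (i \leq p) \wedge (p < i+n) \wedge \mathrm{factoreq}(i,k,n).
\]
This captures the covering condition precisely, since $i$ may be any integer (including negative ones corresponding to translates that partially fall off the left).

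Next, to express the right-hand side I need to encode the function $n \mapsto F_j$ where $F_{j-1} \leq n < F_j$, i.e.\ the smallest Fibonacci number strictly greater than $n$. In Fibonacci representation, this just replaces $(n)_F$ (of length $\ell$) by the word $10^{\ell}$, so the graph of this function is easily recognized by a Fibonacci-synchronized automaton. I would build that automaton with the standard {\tt ?msd\_fib} machinery (using {\tt phin} or a direct regular expression on the pair of representations), and then convert it to a negaFibonacci-synchronized automaton via {\tt rsplit} (since here both arguments are nonnegative). Let us call the result {\tt NEXTFIB}, so that {\tt NEXTFIB}$(n,f)$ holds iff $f = F_j$ with $F_{j-1} \leq n < F_j$.

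With both sides then expressed in negaFibonacci, the theorem reduces to a single {\tt eval} of the biconditional
\[
\forall k,n\;(n\ge 1)\Longrightarrow\bigl[\mathrm{quasi}(k,n)\iff \exists i,f :\ \mathrm{NEXTFIB}(n,f)\wedge 0\le i\wedge i+n+2\le f\wedge \mathrm{factoreq}(i,k,n)\bigr].
\]
The main obstacle I anticipate is state-space blowup: the formula involves nested quantification over $p,i,k$ in $\Zee$ together with the auxiliary function $n\mapsto F_j$, and experience (e.g.\ the Shur computation earlier in the paper, which used 400\,GB of RAM) shows that such negaFibonacci verifications can be expensive. If the single {\tt eval} does not terminate in reasonable resources, the fallback is to split the biconditional into the two implications, possibly verifying the easy direction (any $i$ in the given range gives a covering translate, by periodicity-like arguments on the Sturmian $\mathbf{g}$) separately from the harder one, or using the known factor structure of the Fibonacci word to preprocess the {\tt factoreq} automaton.
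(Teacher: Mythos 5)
Your overall strategy is the same as the paper's: define a first-order predicate for ``$\mathbf{g}[k..k+n-1]$ equals $\mathbf{g}[j..j+n-1]$'', build a quasiperiodicity predicate that quantifies over all $\Zee$ (so that translates may fall off the left edge), express the ``smallest Fibonacci number $>n$'' condition as an auxiliary predicate, and hand the resulting biconditional to {\tt Walnut}. Your {\tt factoreq} and {\tt quasi} predicates are verbatim the paper's {\tt gfactoreq} and {\tt quasibifib}, and the final {\tt eval} is the same formula.

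The one place you diverge is in realizing the map $n \mapsto F_j$. You propose to build a Fibonacci-synchronized automaton {\tt NEXTFIB} in {\tt msd\_fib} (using the fact that if $(n)_F$ has length $\ell$ then $(F_j)_F = 10^\ell$) and then push it across to negaFibonacci with {\tt rsplit}. The paper instead stays entirely in {\tt msd\_neg\_fib}: it writes a regular expression {\tt isfib} for the negaFibonacci representations of the Fibonacci numbers, namely $(0^*1(00)^*\mid 0^*(10)^*1)$ (exploiting that $F_n = F_{n-1}+F_{n-3}+\cdots$ and the alternating signs in negaFibonacci), and then defines $\minfib(n,s)$ by a first-order minimization: $\isfib(s)\wedge s>n\wedge\forall k\,(\isfib(k)\wedge k>n)\Rightarrow k\geq s$. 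Both realizations compute the same graph. The paper's route avoids the Fibonacci$\to$negaFibonacci conversion entirely; yours avoids having to work out the negaFibonacci expansions of Fibonacci numbers. Neither is more than a few lines of {\tt Walnut}.

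Two small points. First, your worry about state-space blowup (drawing an analogy to the Shur computation that took 400GB and 5468s) is not borne out here: the paper reports that {\tt thm7bifib} returns {\tt TRUE} in about 30\,ms, so no fallback decomposition is needed. Second, be careful that the index shift in your ``replace $(n)_F$ by $10^\ell$'' description is consistent with the theorem's convention $F_{j-1}\leq n<F_j$ (so the target is $F_j$, whose Zeckendorf word is a single $1$ followed by one more zero than $(n)_F$ has digits); you state this correctly but it is an easy off-by-one to get wrong in the regular expression.
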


\begin{proof}
We first check whether a number in negaFibonacci representation is Fibonacci number and 
assert that a number is the smallest Fibonacci number that is larger than $n$ as follows. 
\begin{align*}
    \isfib(x) &:= \exists i\ x = F_i \\ 
    \minfib(n,s) &:= \isfib(s) \land s>n \land \forall k\ (\isfib(k) \land k>n) \implies k \geq s
\end{align*}
The statements translate into \texttt{Walnut} as follows. 
To generate the regular expression for $\isfib$, 
we utilize the fact that the terms of the negaFibonacci system alternate in signs and 
$F_n = F_{n-1} + F_{n-3} + \cdots + F_1$
if $n$ is odd.  
\begin{verbatim}
reg isfib msd_neg_fib "(0*1(00)*|0*(10)*1)":
def minfib "?msd_neg_fib $isfib(s) & s>n & Ak ($isfib(k) & k>n) => k>=s":
\end{verbatim}
We then use the following to assert quasiperiodicity in the bi-infinite Fibonacci word. 
\begin{align*}
    \gfactoreq(i,j,n) &:= \forall t\ (t \geq 0 \land t < n) \implies {\bf g}[i+t] = {\bf g} [j+t] \\
    \quasibifib(k, n) &:= \forall i\ \exists j\ (i<j+n) \land (j \leq i) \land \gfactoreq(k,j,n)
\end{align*}
We express the full theorem as follows.
\begin{align*}
    \forall k,n\ &n\geq 1 \implies \\
    &(\quasibifib(k,n) \iff 
    \exists i, s\ \minfib(n,s) \land i \geq 0 \land i \leq s-n-2 \land \gfactoreq(k,i,n))
\end{align*}
Translating into \texttt{Walnut}, we have the following. 
\begin{verbatim}
def gfactoreq "?msd_neg_fib At (t>=0 & t<n) => G2[i+t]=G2[j+t]":
def quasibifib "?msd_neg_fib Ai Ej i<j+n & j<=i & $gfactoreq(k,j,n)":
eval thm7bifib "?msd_neg_fib Ak,n (n>=1) => ($quasibifib(k,n) 
   <=> (Ei,s $minfib(n,s) & i>=0 & i<=s-n-2 & $gfactoreq(k,i,n)))":
# 30 ms
# returns TRUE
\end{verbatim}
This completes the proof.
\end{proof}

\section{Final remarks}
\label{ten}

{\tt Walnut} can also be used to prove (or reprove) other results in Shevelev's paper \cite{Shevelev:2017}.
However, these additional problems do not deal with negative bases, so they are not the focus of this paper.  
We simply mention briefly that his
Open Problem A, also mentioned in 
\cite{Shevelev:2012}, has recently been completely solved
in \cite{Meleshko&Ochem&Shallit&Shan:2022}.

Furthermore, his Theorem 6, dealing with the critical exponent of the sequence counting (modulo $2$) the number of runs of $1$'s in the (ordinary)
binary representation of $n$, can be easily proved
and even improved as follows:
\begin{theorem}
Define $r(n)$ to be the parity of the number of runs of $1$'s in the binary representation of $n$.  Then the infinite word ${\bf r} = (r(n))_{n \geq 0}$ has
no $(4+\epsilon)$-powers, and this bound is optimal.
\end{theorem}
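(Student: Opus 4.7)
The plan is to construct a base-$2$ DFAO for $\mathbf{r}$ and then use \texttt{Walnut} to decide both halves of the claim (the upper bound on the critical exponent and its optimality) automatically.

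First I would build the base-$2$ DFAO for $r(n)$. Counting (mod $2$) the runs of $1$'s in the binary expansion of $n$ can be done by a finite-state machine that reads the digits most-significant-first and maintains the pair (previous digit, current parity); this gives a $4$-state DFA. Transitions: reading a $1$ when the previous digit was $0$ starts a new run and toggles the parity, while reading a $1$ after a $1$ or reading any $0$ leaves the parity alone. The initial state has previous-digit $0$ and parity $0$, so leading zeros are absorbed and the DFAO computes the same value on every base-$2$ representation of $n$ (as required by \texttt{Walnut}). A quick check against the definition on $n = 0, 1, \ldots, 7$ confirms correctness; then one loads the DFAO as \texttt{R}.

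Next, to rule out $(4+\epsilon)$-powers it is enough to rule out $(p/q)$-powers with $p = 4q+1$ for every $q \geq 1$, since every longer $(p/q)$-power with $p/q > 4$ contains such a factor as a prefix. The corresponding existential statement is handled by
\begin{verbatim}
eval rtest "Ei,n (n>=1) & At (t<=3*n) => R[i+t]=R[i+t+n]":
\end{verbatim}
and \texttt{Walnut} should return \texttt{FALSE}. For optimality I would ask \texttt{Walnut} to confirm that a $4$-power (a factor of length $4n$ with period $n$) actually does occur:
\begin{verbatim}
eval rfour "Ei,n (n>=1) & At (t<3*n) => R[i+t]=R[i+t+n]":
\end{verbatim}
expecting \texttt{TRUE}. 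Together these two queries pin the critical exponent at exactly $4$ with $4$ attained but not exceeded.

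The main obstacle is not computational but rather constructional: convincing oneself that the $4$-state DFAO really does compute $r(n)$ for \emph{every} base-$2$ representation of $n$, including those padded with leading zeros; this is the invariant that must be preserved for \texttt{Walnut}'s decision procedure to be sound. Once the DFAO has been validated, the state complexity stays tiny and both \texttt{eval} calls should run essentially instantaneously.
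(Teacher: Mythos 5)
Your proposal is correct and takes essentially the same approach as the paper: build a base-$2$ DFAO for $r$, then issue exactly the same \texttt{Walnut} query with bound \texttt{t<=3*n} to rule out $(4+\epsilon)$-powers. The only cosmetic differences are that the paper obtains the DFAO from the regular expression \texttt{0*11*(00*11*00*11*)*0*} (followed by \texttt{combine}) rather than by hand-coding the four states, and it settles optimality by observing directly that ${\bf r}[1..4]=1111$ rather than via a second \texttt{Walnut} query; both choices are interchangeable and your version is equally sound.
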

\begin{proof}
   It is easy to see that ${\bf r}[1..4] = 1111$, so clearly $\bf r$ has
   4th powers.   To show $\bf r$ has
   no $(4+\epsilon)$-powers, we use
   {\tt Walnut}:
\begin{verbatim}
reg runs msd_2 "0*11*(00*11*00*11*)*0*":
# number of runs of 1's in binary expansion of n is odd

combine RU runs:
# turn it into a DFAO

eval has4e "Ei,n (n>=1) & At (t<=3*n) => RU[i+t]=RU[i+t+n]":
# assert it has (4+epsilon)-powers
# Walnut returns FALSE
\end{verbatim}
\end{proof}
The sequence $\bf r$ is
sequence \seqnum{A268411} in the OEIS.

\section*{Acknowledgments}

We are very grateful to Arseny Shur for his suggestion that we
try to
prove Theorem~\ref{shur} with
{\tt Walnut}.  Some of the calculations with
{\tt Walnut} were done on the CrySP RIPPLE Facility at the University of Waterloo. Thanks to Ian Goldberg for allowing us to run computations on this machine.


\begin{thebibliography}{10}

\bibitem{Allouche&Shallit:2003}
J.-P. Allouche and J.~Shallit.
\newblock {\em Automatic Sequences: Theory, Applications, Generalizations}.
\newblock Cambridge University Press, 2003.

\bibitem{Berstel:1980b}
J.~Berstel.
\newblock Mots de {Fibonacci}.
\newblock {\em {S\'eminaire} d'Informatique {Th\'eorique}, LITP} {\bf 6-7}
  (1980--81), 57--78.

\bibitem{Bunder:1992}
M.~W. Bunder.
\newblock Zeckendorf representations using negative {Fibonacci} numbers.
\newblock {\em Fibonacci Quart.} {\bf 30} (1992), 111--115.

\bibitem{Du&Su:2020}
J.~Du and X.~Su.
\newblock On the existence of solutions for {Frenkel-Kontorova} models on
  quasi-crystals.
\newblock Arxiv preprint 2012.15594 [math.DS]. Available at
  \url{https://arxiv.org/abs/2012.15594}, 2020.

\bibitem{Grunwald:1885}
V.~{Gr\"unwald}.
\newblock Intorno all'aritmetica dei sistemi numerici a base negativa con
  particolare riguardo al sistema numerico a base negativo-decimale per lo
  studio delle sue analogie coll'aritmetica (decimale).
\newblock {\em Giornale di Matematiche di Battaglini} {\bf 23} (1885),
  203--221.
\newblock Errata, p. 367.

\bibitem{Knuth:1998}
D.~E. Knuth.
\newblock {\em The Art of Computer Programming}, Vol. 3: Seminumerical
  Algorithms.
\newblock Addison-Wesley, 3rd edition, 1998.

\bibitem{Labbe&Lepsova:2021}
S.~{Labb\'e} and J.~{Lep\v{s}ov\'a}.
\newblock A numeration system for {Fibonacci-like} {Wang} shifts.
\newblock In T.~Lecroq and S.~Puzynina, editors, {\em WORDS 2021}, Vol. 12847
  of {\em Lecture Notes in Computer Science}, pp.  104--116. Springer-Verlag,
  2021.

\bibitem{Lekkerkerker:1952}
C.~G. Lekkerkerker.
\newblock Voorstelling van natuurlijke getallen door een som van getallen van
  {Fibonacci}.
\newblock {\em Simon Stevin} {\bf 29} (1952), 190--195.

\bibitem{Leve&Richomme:2004}
F.~{Lev\'e} and G.~Richomme.
\newblock Quasiperiodic infinite words: some answers.
\newblock {\em Bull. European Assoc. Theor. Comput. Sci.} , No.\ 84, (2004),
  128--138.

\bibitem{Meleshko&Ochem&Shallit&Shan:2022}
J.~Meleshko, P.~Ochem, J.~Shallit, and S.~L. Shan.
\newblock Pseudoperiodic words and a question of {Shevelev}.
\newblock Arxiv preprint arXiv:2207.10171 [math.CO], available at
  \url{https://arxiv.org/abs/2207.10171}, 2022.

\bibitem{Mousavi:2016}
H.~Mousavi.
\newblock Automatic theorem proving in {{\tt Walnut}}.
\newblock Arxiv preprint arXiv:1603.06017 [cs.FL], available at
  \url{http://arxiv.org/abs/1603.06017}, 2016.

\bibitem{Mousavi&Schaeffer&Shallit:2016}
H.~Mousavi, L.~Schaeffer, and J.~Shallit.
\newblock Decision algorithms for {Fibonacci}-automatic words, {I:} basic
  results.
\newblock {\em RAIRO Inform. Th\'eor. App.} {\bf 50} (2016), 39--66.

\bibitem{Rosema&Tijdeman:2005}
S.~W. Rosema and R.~Tijdeman.
\newblock The {Tribonacci} substitution.
\newblock {\em Electronic J. Combinatorics} {\bf 5}(3) (2005), \#A13
  (electronic).

\bibitem{Shallit:2021b}
J.~Shallit.
\newblock Synchronized sequences.
\newblock In T.~Lecroq and S.~Puzynina, editors, {\em WORDS 2021}, Vol. 12847
  of {\em Lecture Notes in Computer Science}, pp.  1--19. Springer-Verlag,
  2021.

\bibitem{Shallit:2022}
J.~Shallit.
\newblock {\em The Logical Approach to Automatic Sequences: Exploring
  Combinatorics on Words with {\tt Walnut}}.
\newblock Cambridge University Press, 2022.
\newblock In press.

\bibitem{Shallit&Shur:2019}
J.~Shallit and A.~Shur.
\newblock Subword complexity and power avoidance.
\newblock {\em Theoret. Comput. Sci.} {\bf 792} (2019), 96--116.

\bibitem{Shevelev:2012}
V.~Shevelev.
\newblock Equations of the form $t(x+a)=t(x)$ and $t(x+a) = 1-t(x)$ for
  {Thue-Morse} sequence.
\newblock Arxiv preprint arXiv:0907.0880 [math.NT], available at
  \url{https://arxiv.org/abs/0907.0880}, 2012.

\bibitem{Shevelev:2017}
V.~Shevelev.
\newblock Two analogs of the the {Thue-Morse} sequence.
\newblock Arxiv preprint arXiv:1603.04434. Available at
  \url{https://arxiv.org/abs/1603.04434}, 2017.

\bibitem{Shur:2000}
A.~M. Shur.
\newblock The structure of the set of cube-free $\mathbb{Z}$-words in a
  two-letter alphabet ({Russian}).
\newblock {\em Izv. Ross. Akad. Nauk Ser. Mat.} {\bf 64} (2000), 201--224.
\newblock English translation in \emph{Izv. Math.} \textbf{64} (2000),
  847--871.

\bibitem{Sloane:2022}
N.~J.~A. Sloane et~al.
\newblock The on-line encyclopedia of integer sequences, 2022.
\newblock Available at \url{https://oeis.org}.

\bibitem{Zeckendorf:1972}
E.~Zeckendorf.
\newblock {Repr\'esentation} des nombres naturels par une somme de nombres de
  {Fibonacci} ou de nombres de {Lucas}.
\newblock {\em Bull. Soc. Roy. {Li\`ege}} {\bf 41} (1972), 179--182.

\end{thebibliography}
\end{document}